\documentclass[lettersize,journal]{IEEEtran}

\usepackage{orcidlink}
\usepackage{amsmath,amssymb,amsfonts,amsthm,bm}
\usepackage{graphicx,booktabs,array,cite,url}
\usepackage{float}
\usepackage[caption=false,font=footnotesize]{subfig}

\newtheorem{theorem}{Theorem}
\newtheorem{proposition}{Proposition}

\theoremstyle{remark}
\newcommand{\CN}{\mathcal{CN}}
\newcommand{\E}{\mathbb E}
\newcommand{\Prob}{\mathbb P}
\newcommand{\tr}{\operatorname{tr}}
\newcommand{\rank}{\operatorname{rank}}
\newcommand{\diag}{\operatorname{diag}}
\newcommand{\range}{\operatorname{range}}
\newcommand{\nullsp}{\operatorname{null}}

\newcommand{\HH}{^{\mathsf H}}
\newcommand{\TT}{^{\mathsf T}}
\newcommand{\norm}[1]{\left\|#1\right\|_2}
\newcommand{\pos}[1]{\left[#1\right]^+}

\usepackage{algorithm}
\usepackage{algpseudocode}  

\usepackage[T1]{fontenc}

\algrenewcommand\algorithmicrequire{\textbf{Input:}}
\algrenewcommand\algorithmicensure{\textbf{Output:}}

\def\BibTeX{{\rm B\kern-.05em{\sc i\kern-.025em b}\kern-.08em
		T\kern-.1667em\lower.7ex\hbox{E}\kern-.125emX}}
\usepackage{balance}
\makeatletter
\newcommand*{\rom}[1]{\expandafter\@slowromancap\romannumeral #1@}
\makeatother
\providecommand{\triangleq}{\stackrel{\triangle}{=}}

\theoremstyle{remark}

\begin{document}
	\title{Group Physical-Layer Authentication and Confidential Beamforming in Non-Terrestrial Networks}
	\author{Parsa~Rajabi~\orcidlink{0009-0005-4645-9444},
		Mohammad~Reza~Abedi~\orcidlink{0000-0003-4114-1339}~\IEEEmembership{Member,~IEEE},
		Paeiz~Azmi~\orcidlink{0000-0001-9736-3462}~\IEEEmembership{Senior~Member,~IEEE},
		Nader~Mokari~\orcidlink{0000-0001-5364-8888}~\IEEEmembership{Senior~Member,~IEEE}
		\thanks{This work has been submitted to the IEEE for possible publication.
			Copyright may be transferred without notice, after which this version may
			no longer be accessible.}
		\thanks{P. Rajabi, M. R. Abedi, P. Azmi, and N. Mokari are with the Department of Electrical and Computer Engineering, Tarbiat Modares University, Tehran, 14115-111, Iran (e-mail: parsa\_rajabi@modares.ac.ir; Mohammadreza\_abedi@modares.ac.ir; pazmi@modares.ac.ir; nader.mokari@modares.ac.ir).}
	}
	
	\markboth{ }%
	{ }
	
	\maketitle
	
	\begin{abstract}
		Coordinated high-altitude platform station (HAPS) networks need to authenticate group responses and protect private data under shared power constraints. This paper studies joint group physical-layer authentication (PLA) and confidential beamforming using a fresh secret tag superimposed on private data. For received attacks independent of the current tag, we derive exact finite-block expressions for false acceptance and legitimate rejection. These expressions translate authentication error targets into a required tag signal-to-interference-plus-noise ratio (SINR). We establish a nonzero asymptotic tag-energy floor and analyze residual frequency offset and peak-power restrictions. The legitimate receiver removes the tag by orthogonal projection, which discards one temporal dimension. For colluding eavesdroppers, we identify the spatial coverage condition under which artificial noise (AN) keeps their SINR bounded as data and AN powers increase together. We jointly optimize the private beams, tag beam, and AN covariance to minimize transmit power under authentication, secrecy, and per-HAPS power constraints. The method combines semidefinite relaxation (SDR), exact private-beam rank reduction, and feasible successive convex approximation. A corrected dual lower bound measures the remaining optimality gap. Numerical results show that increasing the tag length from 16 to 64 symbols reduces the required tag SINR by 7.16 dB. Joint beam design reduces mean transmit power by 27.4\% relative to an optimized fixed-tag-direction baseline. The authentication guarantee requires fresh secret tags and excludes relays of the current response.
	\end{abstract}
	\begin{IEEEkeywords}
		Physical-layer authentication, high-altitude platform station, artificial noise, statistical detection, semidefinite optimization.
	\end{IEEEkeywords}

	\section{Introduction}
	\IEEEPARstart{C}{oordinated} high-altitude platform stations (HAPSs) can transmit a common service identifier and multiple private streams over a shared resource. Their distributed aperture offers spatial flexibility, but it also makes the meaning of authentication important: receiving the correct service tag does not identify every transmitting platform. Moreover, a favorable tag secrecy rate is not an impersonation probability. A resource manager needs a quantitative relation between transmit decisions and the actual verifier, together with an explicit description of the attacks covered by that relation.
	
	Active physical-layer authentication (PLA) embeds authentication information in a transmitted waveform; surveys review tag-based, channel-based, and hardware-feature approaches \cite{xie2021}. Concurrent data and secret-tag transmission was studied in \cite{yu2008}. 
	A detection scheme for cooperative replay attacks was proposed in \cite{liang2025physical}, exploiting noise characteristics in replayed signals.
	Data and artificial noise (AN) precoding and tag-power allocation for authentication and secrecy in massive multiple-input multiple-output (MIMO) unmanned aerial vehicle (UAV) communications were studied in \cite{ref31}. Recent papers consider different combinations of authentication and confidentiality. Tag-based encoding for joint secrecy and authentication in a UAV communication setting, including AN, is studied in \cite{wang2025joint}. Two-way privacy-preserving authentication for UAV-assisted MIMO, using carrier-frequency-offset features and session-key establishment, is considered in \cite{zhang2025enhanced}. AN-aided MIMO PLA under imperfect channel state information (CSI) has also been studied directly \cite{perazzone2021}; the use of AN for authentication is therefore not claimed as new here. Distributed PLA has been surveyed as a collaborative architecture \cite{zhang2024}, while HAPS network reviews provide the non-terrestrial deployment context \cite{kurt2021}. The guarantees analyzed here depend on the spherical-tag experiment and threat assumptions stated below, and should not be inferred from these different mechanisms.
	
	Freshness is a separate issue from the choice of detector. Orthogonal frequency-division multiplexing (OFDM) challenge-response PLA is analyzed in \cite{liu2026use}, which reports that correlation among subchannel responses can enable effective attacks, motivating an explicit statement of the randomness premise. Series-cancellation PLA has also been studied \cite{ref48}. Here, the tag is modeled as a fresh secret spherical response generated from provisioned group-key material. Our false-acceptance result applies only when the current received attack waveform is independent of that response; it does not cover a relay that learns and forwards the current response.
	
	Normalized matched-filter statistics have an established literature, including adaptive normalized matched-filter design \cite{kammoun2018}. We do not claim a new matched filter. Instead, we derive its exact finite-block angular law for the stated spherical-template experiment and connect the legitimate-rejection probability to a transmit-side tag constraint. This calibration depends on the receiver statistic and its threat assumptions, not just on a nominal tag secrecy rate.
	
	AN-based spatial protection is also established \cite{goel2008}, and directional AN covariance design has been studied for multi-eavesdropper secrecy optimization \cite{lima2013}. The present formulation keeps the full covariance observed by colluding receivers and evaluates leakage after granting them the tag and all non-target private streams. This is a deliberately conservative leakage benchmark, not an assertion that an impersonator knows the current secret response.
	
	Minimum-power multicast beamforming is generally nonconvex, and semidefinite relaxation (SDR) is a standard analysis tool for it \cite{sidiropoulos2006}. The design also uses established convex-optimization and sequential-approximation methods \cite{boyd2004,beck2010}. Our claim is limited to the specific combination developed below: calibration of the spherical group-tag verifier, the corresponding sufficient secrecy constraints, and a private-beam rank reduction for this model.
	
	Additional tag-based designs address different signaling and channel settings. Authentication under frequency-selective fading is studied in \cite{ref17}; phase-tag authentication is studied in \cite{ref76}; and authentication in spatial modulation is studied in \cite{ref77}. Superimposed-tag authentication over Nakagami-m block-flat fading is analyzed and theoretical predictions are compared with GNU Radio simulations in \cite{ref49}.
	
	Reconfigurable intelligent surfaces (RISs) have also been incorporated into active authentication. Tag-based authentication for RIS is studied in \cite{ref30}, while RIS-assisted challenge-response authentication with multiple-antenna devices is designed in \cite{ref82}. These settings concern authentication mechanisms and propagation architectures different from coordinated HAPS group delivery.
	
	Backscatter systems provide another application of physical-layer challenge-response mechanisms. Mutual authentication between a radio-frequency source and backscatter devices is studied in \cite{zhang2025authscatter}, while a subsequent study considers challenge-response authentication between ambient backscatter devices \cite{ref47}. Joint covert communication and authentication during millimeter-wave beam alignment are investigated in \cite{ref44}. Incentives for collaborative authentication through centralized governance are addressed in \cite{zhou2025incentive}.
	
	Formal security models and message-authentication constructions provide complementary perspectives. A PLA model over fading channels is analyzed in \cite{xie2018security}, and message authentication with secure channel codes is studied in \cite{chen2018physical}. Classical confidential-message coding \cite{csiszar1978} provides background for the Gaussian secrecy benchmark used here; it does not replace the separate impersonation experiment.

	In our earlier work~\cite{parsaPreviousHAPS}, we jointly studied HAPS
	deployment, user association, and power allocation for collaborative
	group-tag and private-data delivery with AN under colluding and
	non-colluding eavesdropping, and evaluated secrecy spectral efficiency
	(SSE) and user fairness. The present paper addresses a different set of questions:
	how to generate and verify a fresh group tag over a finite block, how to
	control legitimate rejection and false acceptance, and how to jointly design
	confidential private beams and AN under calibrated error constraints
	and a conservative colluding-receiver benchmark. Accordingly, the earlier work focused on network-level resource
	management and performance metrics, whereas this paper focuses on
	calibrated group-response authentication, secrecy-constrained beamforming,
	and deterministic minimum-power design. The simulation code is also available in \cite{hapsCode2026}.
	
	The main contributions are summarized as follows.
	\begin{itemize}
		\item We establish a finite-block relation between group-tag verification and transmit-side authentication constraints. For a fresh spherical tag and a received attack independent of that tag, we characterize false acceptance and derive the legitimate-rejection probability. These results determine the tag SINR required to meet prescribed error targets. We further derive a nonzero asymptotic tag-energy floor, quantify residual frequency-offset sensitivity, and obtain a conservative threshold correction for tags selected under a peak-power restriction.
		\item We formulate a confidential-transmission benchmark that explicitly accounts for tag removal and colluding eavesdroppers. The legitimate receiver removes the tag by orthogonal projection, which discards one temporal dimension. The eavesdroppers are granted the tag and all non-target private streams. For fixed beam and covariance shapes under joint data and AN power scaling, we derive a necessary and sufficient spatial coverage condition for bounded eavesdropper SINR. A fixed-beam specialization also identifies the leakage floor and reduces AN power selection to a scalar feasibility test.
		\item We develop a minimum-power design that jointly determines the private beams, group-tag beam, and AN covariance under authentication, secrecy, and per-HAPS power constraints.
	\end{itemize}
	
	The objective is an analytical resource-design framework, not a replacement for cryptographic provisioning. The group tag establishes possession of an authorized group response under the specified attack model. Individual platform identity, malicious authorized members, and payload-message integrity require additional mechanisms.
	
	\emph{Notation:} $(\cdot)\HH$ and $(\cdot)\TT$ denote Hermitian transpose and transpose, and $(\cdot)^\dagger$ denotes the Moore--Penrose inverse. $\norm{\cdot}$, $\tr(\cdot)$, and $\rank(\cdot)$ denote Euclidean norm, trace, and rank. $\mathbf I_m$ is the $m$-dimensional identity, and $\mathbf 1_U$ is the all-ones vector in $\mathbb R^U$; $\mathbf A\succeq0$ means Hermitian positive semidefinite. $\CN(\boldsymbol\mu,\mathbf C)$ is a circular complex Gaussian law, $\E$ denotes expectation, $j^2=-1$, and $[x]^+=\max(x,0)$. Logarithms without a base in proofs are natural.
	
	\section{System Model}
	\label{sec:model}
	
	This section introduces the coordinated transmission model, channel assumptions, and power constraints. It then defines the fresh group tag, attack model, and verification procedure.
	\subsection{Coordinated Group Transmission}
	Consider one scheduled group on one flat-fading resource block (RB) in Fig.~\ref{fig:system-model}. Its $U$ single-antenna users are served by $K$ HAPSs, each with $M$ transmit antennas; $N=KM$ is the total number of transmit dimensions. Different groups may use orthogonal RBs. All rates below use the bandwidth of this RB. The analysis and optimization are conditional on one channel snapshot; no average channel is substituted for an instantaneous spatial constraint.
	
	\begin{figure}[t]
		\centering
		\includegraphics[width=\columnwidth]{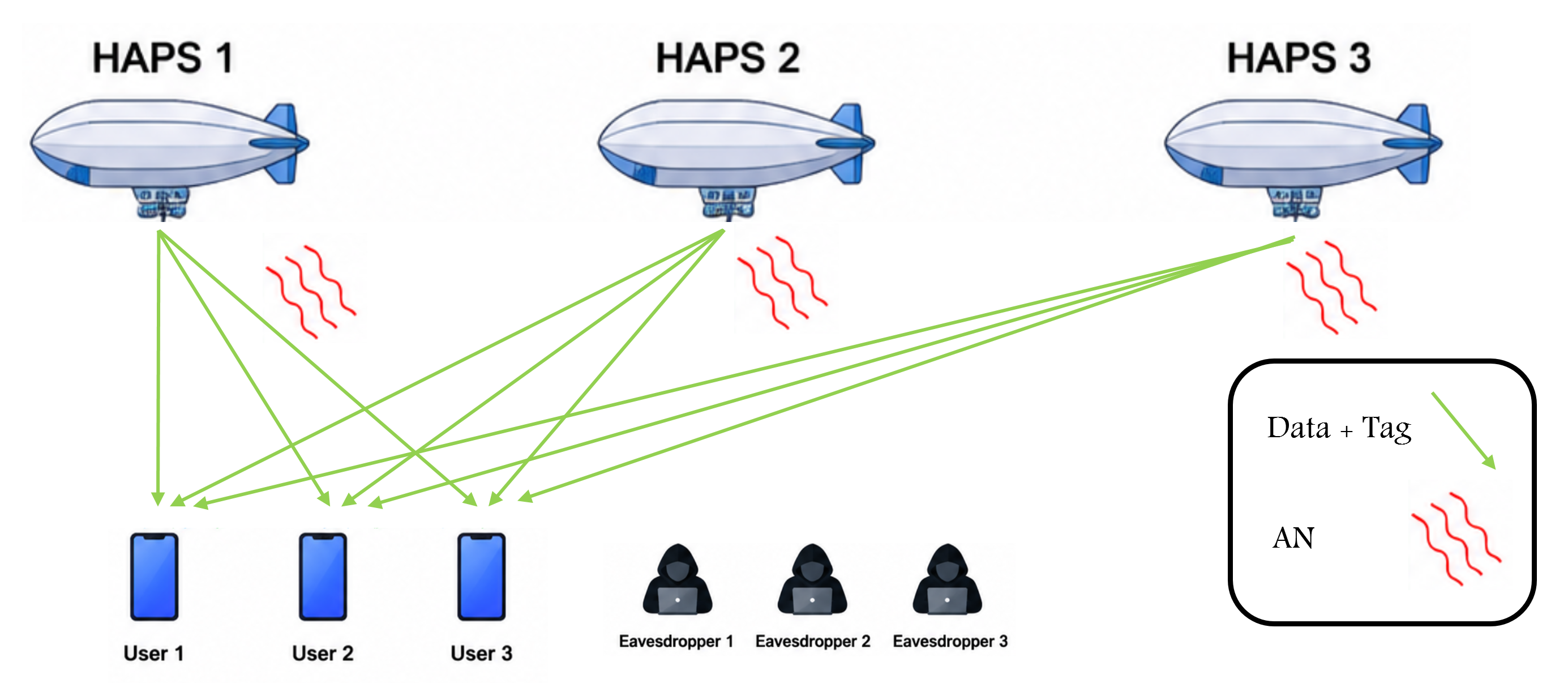}
		\caption{System model.}
		\label{fig:system-model}
	\end{figure}
	
	The stacked channel to user $u$ is $\mathbf h_u\in\mathbb C^N$, with received scalar $\mathbf h_u\HH\mathbf x$. The matrix $\mathbf H=[\mathbf h_1,\ldots,\mathbf h_U]\HH\in\mathbb C^{U\times N}$ collects legitimate channels. Let $\mathbf G\in\mathbb C^{E\times N}$ collect the channels to $E$ cooperating single-antenna eavesdroppers in the same row convention. Their independent receiver noises have variance $\sigma_e^2>0$; legitimate noises have variance $\sigma_u^2>0$. Equal variances are used only in the numerical examples.
	
	During a block of $n\ge2$ complex symbols, the stacked transmitted signal is
	\begin{equation}
		\mathbf x[\ell]=\sum_{u=1}^{U}\mathbf v_u b_u[\ell]
		+\sqrt n\,\mathbf w s[\ell]+\mathbf a[\ell],
		\quad \ell=0,\ldots,n-1,
		\label{eq:tx}
	\end{equation}
	where $\mathbf v_u,\mathbf w\in\mathbb C^N$ are private and shared-tag beams. The private symbols are independent $\CN(0,1)$ variables for the Gaussian signaling model. The temporal tag $\mathbf s\in\mathbb C^n$ has $\norm{\mathbf s}=1$. AN satisfies $\mathbf a[\ell]\sim\CN(\mathbf0,\mathbf Q)$ independently over $\ell$, with $\mathbf Q\succeq0$, independently of the private data and tag. All authorized HAPSs transmit their blocks of the \emph{same} $\mathbf w\mathbf s\TT$. This is coherent collaborative group delivery, not a sum of independent tag powers.

	Partition each network-wide beam into HAPS-specific blocks as
	$\mathbf v_u=[\mathbf v_{u,1}^{\mathsf T},\ldots,
	\mathbf v_{u,K}^{\mathsf T}]^{\mathsf T}$ and
	$\mathbf w=[\mathbf w_1^{\mathsf T},\ldots,\mathbf w_K^{\mathsf T}]^{\mathsf T}$,
	where each block has $M$ entries. HAPS $k$ contributes to user $u$ whenever
	$\mathbf v_{u,k}\neq\mathbf 0$. Thus, users may be served by different
	subsets of HAPSs, and these subsets may overlap. Equation~\eqref{eq:tx}
	represents the stacked network-wide signal; each HAPS transmits only the
	components corresponding to its own antenna block.
	
	Let $\mathbf D_k$ be the diagonal selector for the $M$ antennas of HAPS $k$, and let $P_k^{\max}>0$ denote its radio-frequency (RF) transmit-power budget. Average power over the entire tag block is
	\begin{equation}
		P_k=\sum_u\norm{\mathbf D_k\mathbf v_u}^2
		+\norm{\mathbf D_k\mathbf w}^2+\tr(\mathbf D_k\mathbf Q).
		\label{eq:power}
	\end{equation}
	The normalization $\sqrt n$ in \eqref{eq:tx} makes the tag power $\norm{\mathbf w}^2$, independently of $n$. Individual tag samples need not have constant modulus. Power-amplifier backoff and peak constraints are separate engineering requirements.
	
	We restrict AN to the legitimate-channel null space. Let $d=N-\rank(\mathbf H)$ and let $\mathbf F\in\mathbb C^{N\times d}$ have orthonormal columns spanning $\nullsp(\mathbf H)$. Then
	\begin{equation}
		\mathbf Q=\mathbf F\mathbf A\mathbf F\HH,\qquad
		\mathbf A\succeq0,\qquad \mathbf H\mathbf Q=\mathbf0.
		\label{eq:null}
	\end{equation}
	For $d=0$, $\mathbf Q=\mathbf0$. Unlike independent per-HAPS AN, this full covariance can correlate AN across platforms. A trusted controller must distribute the corresponding random samples or shared random seeds. The full-covariance model therefore assumes coordinated fronthaul as well as shared private data. Local AN can be enforced by additionally requiring $\mathbf Q$ to be block diagonal, but that restricted architecture is not used for the optimized curves.
	
	All effective channels include propagation and calibrated carrier phases. Timing, carrier-frequency compensation, and channel constancy over the block are assumed for the principal analysis. This does not require estimating the aggregate tag amplitude at the verifier. It does require the beamformer to know the channel snapshot used for spatial design. Eavesdropper CSI is a design benchmark, not an assertion that arbitrary passive receivers can always be located or sounded.
	
	\subsection{Freshness, Threat Model, and Verification}
	A trusted initialization supplies a group key $K_g$ to its authorized HAPSs and verifiers. For a fresh challenge $c$, group identifier $g$, and public frame identifier $f$, a keyed pseudorandom function (PRF) expansion produces a vector $\mathbf z\in\mathbb C^n$; its normalized version is the tag:
	\begin{equation}
		\mathbf s=\mathbf z/\norm{\mathbf z},\qquad
		\mathbf z\sim\CN(\mathbf0,\mathbf I_n).
		\label{eq:tag}
	\end{equation}
	Every challenge/frame tuple is accepted at most once. The secure expansion and finite-precision Gaussian mapping are implementation assumptions. The exact continuous laws concern the ideal spherical experiment; they are not information-theoretic security claims for a finite secret key.
	
	Under the authorized hypothesis $\mathcal H_1$, the $n$ samples at user $u$ form
	\begin{equation}
		\begin{aligned}
			\mathbf y_u&=\sqrt n\,a_u\mathbf s+\boldsymbol\xi_u,
			\qquad a_u=\mathbf h_u\HH\mathbf w,\\
			\boldsymbol\xi_u&\sim\CN(\mathbf0,v_u^{\rm a}\mathbf I_n),
		\end{aligned}
		\label{eq:authorized}
	\end{equation}
	where $v_u^{\rm a}$ denotes the aggregate interference-plus-noise
	variance at user $u$, defined as
	\begin{equation}
		v_u^{\rm a}=\sum_j|\mathbf h_u\HH\mathbf v_j|^2
		+\mathbf h_u\HH\mathbf Q\mathbf h_u+\sigma_u^2.
		\label{eq:va}
	\end{equation}
	The superscript $\rm a$ indicates disturbance during authentication; $v_u^{\rm a}$ is a scalar variance, distinct from vector $\mathbf v_u$. Equation~\eqref{eq:authorized} follows exactly for independent Gaussian private symbols, Gaussian AN, and a constant channel. For a coded finite-alphabet waveform it is a modeling approximation that requires separate validation.
	
	Under impersonation $\mathcal H_0$, the authorized response is absent and the received vector $\mathbf y_u$ is independent of the current ideal $\mathbf s$, conditional on public information and design variables. The attacker may choose any transmitted energy, temporal correlation, and waveform using that information and previous transcripts. Several cooperating attackers are allowed if their combined current received waveform still obeys this independence condition. An old recorded response satisfies the condition when the new tag is independent. A relay that observes and forwards the \emph{current} response does not. Neither a short challenge deadline nor a shared key alone proves exclusion of such a relay; it is explicitly outside this experiment.

	For a nonzero received vector, user $u$ computes
	\begin{equation}
		T_u=\frac{|\mathbf s\HH\mathbf y_u|^2}{\norm{\mathbf y_u}^2},
		\qquad \mathcal H_1\ \hbox{is accepted if }T_u\ge t_n.
		\label{eq:test}
	\end{equation}
	Set $T_u=0$ for $\mathbf y_u=\mathbf0$. The test is invariant to an unknown nonzero complex gain and does not need the tag amplitude or disturbance variance to set its threshold. The Gaussian noise model makes a zero observation a probability-zero event. A low-cost received-energy gate may be added, but its legitimate rejection effect would then need separate calibration.
	
	Acceptance authenticates the group response associated with $(c,g,f)$. It does not identify which subset of HAPSs supplied the energy. The shared reference also does not bind each user's private payload unless a verified payload digest is included in the authenticated input. We therefore refer to \emph{group-response authentication and confidential private delivery}, rather than claim a complete message-authentication protocol.
	
	\section{Problem Formulation}
	\label{sec:detection}
	
	We first derive the authentication requirements and examine frequency offset and peak-controlled tags. We then establish the confidentiality model and AN geometry, formulate the minimum-power problem, and develop the solution method and lower bounds.
	\subsection{Authentication Laws and Calibration}
	\begin{theorem}[Fresh-tag false acceptance]
		\label{thm:fa}
		Under $\mathcal H_0$ and the independence experiment above, for any nonzero received waveform and $0\le t\le1$,
		\begin{equation}
			\Prob(T_u\ge t\mid\mathcal H_0)=(1-t)^{n-1}.
			\label{eq:fa}
		\end{equation}
		If the received waveform may be zero, the right-hand side is an upper bound. Consequently, a prescribed per-attempt level $\alpha\in(0,1)$ is obtained by
		\begin{equation}
			t_n=1-\alpha^{1/(n-1)}.
			\label{eq:threshold}
		\end{equation}
	\end{theorem}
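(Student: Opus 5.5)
The plan is to condition on the received vector and use the rotational invariance of the tag. Under $\mathcal H_0$, $\mathbf y_u$ is independent of $\mathbf s$, so I will first condition on $\mathbf y_u=\mathbf y\neq\mathbf0$ and compute $\Prob(T_u\ge t\mid \mathbf y_u=\mathbf y)$. By \eqref{eq:tag}, $\mathbf s=\mathbf z/\norm{\mathbf z}$ with $\mathbf z\sim\CN(\mathbf0,\mathbf I_n)$, so $\mathbf s$ is uniform on the unit sphere of $\mathbb C^n$. Its law is invariant under every unitary $\mathbf U$. I will choose $\mathbf U$ mapping $\mathbf y/\norm{\mathbf y}$ to the first basis vector $\mathbf e_1$. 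The conditional statistic $T_u=|\mathbf s\HH\mathbf y|^2/\norm{\mathbf y}^2$ then has the same law as $|s_1|^2=|z_1|^2/\norm{\mathbf z}^2$, whatever $\mathbf y$ is. Because this conditional law does not depend on $\mathbf y$, averaging over the attacker's arbitrary (possibly adaptive, correlated, multi-source) waveform leaves it unchanged. Independence is exactly what permits this conditioning step.

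Next I will identify the law of $|z_1|^2/\norm{\mathbf z}^2$. The quantities $X=|z_1|^2$ and $Y=\sum_{i\ge2}|z_i|^2$ are independent, with $X\sim\mathrm{Gamma}(1,1)$ and $Y\sim\mathrm{Gamma}(n-1,1)$. Hence $X/(X+Y)\sim\mathrm{Beta}(1,n-1)$, whose survival function is $(1-t)^{n-1}$ on $[0,1]$. If one prefers to avoid the Beta identity, a direct route is to condition on $Y$ and integrate:
\[
\Prob\bigl(X\ge tY/(1-t)\bigr)=\E\bigl[e^{-tY/(1-t)}\bigr]=\bigl(1+t/(1-t)\bigr)^{-(n-1)}=(1-t)^{n-1},
\]
using the Gamma moment generating function. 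This gives \eqref{eq:fa}. The endpoint $t=1$ holds trivially, since the event then has probability zero by continuity for $n\ge2$.

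For a waveform that may be zero, I will split on the event $\{\mathbf y_u=\mathbf0\}$. On that event the convention sets $T_u=0$, so $T_u\ge t$ can occur only for $t=0$, where the bound reads $1\le1$. On the complementary event the conditional probability equals $(1-t)^{n-1}$. The total probability is therefore at most $(1-t)^{n-1}$. Solving $(1-t_n)^{n-1}=\alpha$ gives \eqref{eq:threshold}, and $t_n\in(0,1)$ because $\alpha\in(0,1)$.

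The main obstacle is making the conditioning rigorous when the attacker's waveform depends on public information and past transcripts. I would handle this by noting that, conditional on all public information, the current $\mathbf s$ remains uniform and independent of $\mathbf y_u$ by assumption. The unitary-invariance argument can then be applied inside that conditional expectation, and the result follows by the tower property.
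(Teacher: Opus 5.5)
Your proposal is correct and follows essentially the same route as the paper: condition on the nonzero received vector, use unitary invariance of the spherical tag to reduce $T_u$ to $|z_1|^2/\norm{\mathbf z}^2$, identify the $\mathrm{Beta}(1,n-1)$ law from the independent exponential and gamma components, and average over the independent attack. The paper leaves three points implicit that you spell out: the zero-waveform case, the gamma-MGF check of the tail, and the need for freshness to make $\mathbf s$ uniform conditional on the public information.
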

	\begin{proof}
		Condition on any nonzero $\mathbf y_u$ and rotate its normalized direction to the first coordinate. Spherical invariance of $\mathbf s$ gives $T_u\overset{d}{=}|z_1|^2/\sum_{i=1}^n|z_i|^2$. The numerator is exponential and the remaining sum is an independent gamma variable of shape $n-1$. Thus $T_u$ is beta distributed with parameters $(1,n-1)$, giving \eqref{eq:fa}. Averaging over any independent waveform distribution preserves the result.
	\end{proof}
	
	The randomness in this proof is the fresh secret template, not an assumption that an adversary emits Gaussian noise. In particular, increasing the amplitude of a blind tone or old response cannot change its angular success probability. This property should not be confused with secrecy of the transmitted tag after observation. If the attacker learns the current direction, choosing $\mathbf y_u=c\mathbf s$ makes $T_u=1$.
	
	For $J$ attempts at possibly correlated receivers, a union bound gives a probability at most $J\alpha$ that at least one is falsely accepted, provided freshness and conditional independence hold at each attempt. No independence among receiver decisions is needed. For a finite implementation, let $\varepsilon_{\rm PRF}$ bound the distinguishing advantage, at the assumed adversarial computational budget, between the keyed expansion and independent random inputs. Let $\varepsilon_{\rm map}$ bound the increase in false acceptance caused by finite-precision Gaussian mapping and normalization in the random-input experiment relative to the ideal spherical experiment, over the allowed attack strategies. Conditional on these bounds, $P_{\rm FA}\le\alpha+\varepsilon_{\rm PRF}+\varepsilon_{\rm map}$ for one experiment. Neither term is quantified by the physical simulations. Nonce reuse or leakage of the current response invalidates the premise instead of merely changing its signal-to-interference-plus-noise ratio (SINR).

	Define the authentication SINR and block noncentrality as
	\begin{equation}
		\rho_u=\frac{|\mathbf h_u\HH\mathbf w|^2}{v_u^{\rm a}},
		\qquad q_u=n\rho_u.
		\label{eq:authsnr}
	\end{equation}
	Let $I_t(a,b)$ denote the regularized incomplete beta function, and let $F_{d_1,d_2,\lambda}^{\rm ncf}$ be the cumulative distribution function (CDF) of a noncentral $F$ variable with degrees of freedom $d_1,d_2$ and numerator noncentrality $\lambda$.
	
	\begin{theorem}[Exact calibration]
		\label{thm:md}
		Under \eqref{eq:authorized}, the legitimate rejection probability is
		\begin{align}
			P_{{\rm MD},u}
			&=e^{-q_u}\sum_{r=0}^{\infty}\frac{q_u^r}{r!}
			I_{t_n}(1+r,n-1) \label{eq:mdseries}\\
			&=F_{2,2(n-1),2q_u}^{\rm ncf}
			\left(\frac{(n-1)t_n}{1-t_n}\right).\label{eq:mdf}
		\end{align}
		For $0<\beta<1-\alpha$, there is a unique positive $q_n^\star$ for which \eqref{eq:mdseries} equals $\beta$. The condition $P_{{\rm MD},u}\le\beta$ is equivalent to
		\begin{equation}
			|\mathbf h_u\HH\mathbf w|^2\ge\rho_n^\star v_u^{\rm a},
			\qquad \rho_n^\star=q_n^\star/n.
			\label{eq:calibrated}
		\end{equation}
	\end{theorem}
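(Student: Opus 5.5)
The plan is to condition on the tag and write the statistic in coordinates adapted to $\mathbf s$. Fix $\mathbf s$ and pick a unitary $\mathbf U$ whose first row is $\mathbf s\HH$. Because $\boldsymbol\xi_u$ is circularly symmetric with white covariance, $\mathbf U\boldsymbol\xi_u/\sqrt{v_u^{\rm a}}$ is again $\CN(\mathbf0,\mathbf I_n)$. Hence
\[
T_u=\frac{|X|^2}{|X|^2+R},
\]
where $X\sim\CN(\sqrt{n}\,a_u/\sqrt{v_u^{\rm a}},1)$ and $R=\sum_{i=2}^n|\xi'_i|^2$ is independent of $X$ and gamma distributed with shape $n-1$ and unit scale. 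Since $|{\rm mean}|^2=q_u$, $|X|^2$ is a noncentral exponential variable with noncentrality $q_u$. Equivalently, $2|X|^2$ is $\chi^2_2(2q_u)$ and $2R$ is $\chi^2_{2(n-1)}$. None of this depends on $\mathbf s$, so the conditioning can be dropped.

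For \eqref{eq:mdf}, observe that $T_u<t_n$ holds if and only if
\[
\frac{|X|^2}{R}<\frac{t_n}{1-t_n},
\]
and this is equivalent to
\[
\frac{2|X|^2/2}{2R/(2(n-1))}<\frac{(n-1)t_n}{1-t_n}.
\]
The left side is noncentral $F$ with degrees of freedom $2,2(n-1)$ and noncentrality $2q_u$, which gives \eqref{eq:mdf} directly. For \eqref{eq:mdseries}, I will use the Poisson mixture representation $\chi^2_2(2q)\overset{d}{=}\chi^2_{2+2r}$ with $r\sim{\rm Poisson}(q)$. Conditional on $r$, $|X|^2$ is gamma with shape $1+r$, so $T_u$ is ${\rm Beta}(1+r,n-1)$ and the conditional rejection probability is $I_{t_n}(1+r,n-1)$. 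Averaging over the Poisson weights gives \eqref{eq:mdseries}. The case $r=0$ reproduces Theorem~\ref{thm:fa}, since $I_t(1,n-1)=1-(1-t)^{n-1}$.

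Existence and uniqueness of $q_n^\star$ is the step needing care. Set $g(q)=e^{-q}\sum_r \frac{q^r}{r!}c_r$ with $c_r=I_{t_n}(1+r,n-1)$. Termwise differentiation gives
\[
g'(q)=e^{-q}\sum_r \frac{q^r}{r!}(c_{r+1}-c_r).
\]
The sequence $c_r$ is strictly decreasing in $r$ because ${\rm Beta}(1+r,n-1)$ is stochastically increasing in its first parameter; one way to see this is that $I_t(a+1,b)=I_t(a,b)-\frac{t^a(1-t)^b}{aB(a,b)}$, and the subtracted term is positive for $0<t_n<1$. Therefore $g$ is continuous and strictly decreasing. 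At the endpoints, $g(0)=c_0=1-\alpha$ by the threshold \eqref{eq:threshold}, and $g(q)\to0$ as $q\to\infty$: either $c_r\to0$ combined with Poisson mass escaping to infinity, or directly because $|X|^2\to\infty$ in probability. Termwise differentiation is legitimate because the $c_r$ are bounded. By the intermediate value theorem, each $\beta\in(0,1-\alpha)$ has a unique positive root $q_n^\star$.

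Finally, strict monotonicity gives $P_{{\rm MD},u}\le\beta \iff q_u\ge q_n^\star$. Substituting $q_u=n|\mathbf h_u\HH\mathbf w|^2/v_u^{\rm a}$ and multiplying through by $v_u^{\rm a}/n>0$ yields \eqref{eq:calibrated}. The main obstacle is stating the monotonicity of $c_r$ cleanly. The other steps are standard distributional identities.
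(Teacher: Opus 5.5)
Your proof is correct and follows the paper's route: you rotate so that $\mathbf s$ becomes the first coordinate, split $T_u$ into independent parallel and orthogonal energies, and use the Poisson mixture of the noncentral chi-square numerator for \eqref{eq:mdseries} and the rescaled ratio for \eqref{eq:mdf}. The only difference is the uniqueness step: you obtain strict decrease by differentiating the Poisson series and using the recurrence $I_t(a+1,b)=I_t(a,b)-t^a(1-t)^b/[aB(a,b)]$, whereas the paper invokes strict stochastic ordering of the numerator in its noncentrality, so your version makes continuity and strictness explicit rather than asserted.
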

	\begin{proof}
		Rotate the first basis vector to $\mathbf s$ and divide by $\sqrt{v_u^{\rm a}}$. The parallel energy is $A=|z_1+\sqrt{q_u}|^2$ after an immaterial phase rotation; the orthogonal energy is $B=\sum_{i=2}^n|z_i|^2$, independently. Then $2A\sim\chi'^2_2(2q_u)$, $2B\sim\chi^2_{2(n-1)}$, and $T_u=A/(A+B)$. The Poisson mixture of the numerator's noncentral chi-square law yields \eqref{eq:mdseries}; the ratio yields \eqref{eq:mdf}. Increasing the noncentrality strictly increases the numerator in stochastic order, while $B>0$ almost surely. Hence rejection decreases continuously from $1-\alpha$ to zero, proving uniqueness and \eqref{eq:calibrated}.
	\end{proof}
	
	Only a scalar root search is needed before spatial optimization. The two probabilities have different roles: $\alpha$ sets the threshold from the attack experiment, while $\beta$ sets the required legitimate tag energy. A minimum received tag SINR without this calibration would leave the actual rejection target unspecified.
	
	\begin{proposition}[Nonzero tag-energy floor]
		\label{prop:energy}
		Fix $\alpha$ and $\beta$ as in Theorem~\ref{thm:md}, and set $a=-\ln\alpha$. As $n\to\infty$,
		\begin{equation}
			q_n^\star\longrightarrow q_\infty^\star,
			\qquad F_{\chi'^2_2(2q_\infty^\star)}(2a)=\beta.
			\label{eq:energyfloor}
		\end{equation}
		Thus $\rho_n^\star$ decreases asymptotically as $q_\infty^\star/n$, but the required normalized received tag energy does not vanish.
	\end{proposition}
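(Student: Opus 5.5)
The plan is to pass to the limit in the representation from the proof of Theorem~\ref{thm:md}. There, $P_{{\rm MD},u}=\Prob(A/(A+B)<t_n)$ with $A=|z_1+\sqrt{q}|^2$ and $B=\sum_{i=2}^n|z_i|^2$ independent, and the event is equivalent to $A(1-t_n)<t_nB$. Write $g_n(q)=\Prob\bigl(2A<2t_nB/(1-t_n)\bigr)$, so that $q_n^\star$ is the unique root of $g_n(q)=\beta$. From \eqref{eq:threshold}, $t_n=1-e^{-a/(n-1)}$, which gives $(n-1)t_n\to a$ and $t_n\to0$. By the law of large numbers, $B/(n-1)\to1$ almost surely. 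Hence $t_nB/(1-t_n)=\bigl[(n-1)t_n/(1-t_n)\bigr]\cdot B/(n-1)\to a$ almost surely.

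The random variable $2A\sim\chi'^2_2(2q)$ has a continuous CDF. It follows that $g_n(q)\to g_\infty(q):=F_{\chi'^2_2(2q)}(2a)$ for each fixed $q$, by dominated convergence applied conditionally on $B$ (equivalently, Slutsky's lemma with a continuity point). The limit $g_\infty$ is continuous and strictly decreasing in $q$, by the same stochastic-ordering argument as in Theorem~\ref{thm:md}. It takes the value $1-e^{-a}=1-\alpha$ at $q=0$, since $2A\sim\chi^2_2$ is exponential with mean $2$, and it tends to $0$ as $q\to\infty$. Because $0<\beta<1-\alpha$, there is therefore a unique $q_\infty^\star>0$ with $g_\infty(q_\infty^\star)=\beta$, which is \eqref{eq:energyfloor}.

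The main obstacle is passing from pointwise convergence of $g_n$ to convergence of the roots $q_n^\star$. I would use monotonicity for this. Fix $\epsilon>0$. Since $g_\infty(q_\infty^\star-\epsilon)>\beta>g_\infty(q_\infty^\star+\epsilon)$, pointwise convergence at these two points gives, for all large $n$, $g_n(q_\infty^\star-\epsilon)>\beta>g_n(q_\infty^\star+\epsilon)$. Each $g_n$ is strictly decreasing by Theorem~\ref{thm:md}, so $q_n^\star\in(q_\infty^\star-\epsilon,q_\infty^\star+\epsilon)$. This needs convergence at only two points, and no uniformity argument is required.

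Finally, $\rho_n^\star=q_n^\star/n\sim q_\infty^\star/n$. Since $q_\infty^\star>0$, the normalized received tag energy $n\rho_n^\star=q_n^\star$ stays bounded away from zero, which gives the stated floor.
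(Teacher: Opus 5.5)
Your proof is correct and follows the same route as the paper. The paper's proof also combines the law of large numbers for $B/(n-1)$ with the threshold limit $(n-1)t_n\to a$ to get pointwise convergence of the rejection probability to $F_{\chi'^2_2(2q)}(2a)$, and then uses continuity and strict monotonicity in $q$ to bracket $q_n^\star$ within $q_\infty^\star\pm\epsilon$. Your rewriting of the event as $A<t_nB/(1-t_n)$ with conditional dominated convergence, and your check that $g_\infty(0)=1-\alpha>\beta$ gives a unique positive $q_\infty^\star$, just make explicit steps that the paper leaves implicit.
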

	\begin{proof}
		For fixed $q$, $B/(n-1)\to1$ almost surely and $nt_n\to a$. Therefore $nA/(A+B)$ converges in distribution to $A$ and rejection converges to $\Prob(A<a)$. Continuity and strict monotonicity in $q$ bracket the finite-$n$ root between $q_\infty^\star\pm\epsilon$ for all sufficiently large $n$.
	\end{proof}
	
	This is a received-energy statement conditional on disturbance variance. Increasing $n$ cannot provide authentication with arbitrarily little block energy. The result also separates a processing gain from spatial gain: longer tags reduce the required per-symbol ratio, whereas better beams reduce the physical transmit energy needed to realize that ratio.
	
	\subsection{Frequency Offset and Peak-Controlled Tags}
	An uncompensated phase slope changes the tag direction during the block. To isolate this effect, consider a common residual frequency offset $\Delta f$ affecting all received streams, a sample period $T_s$, and constant effective channel magnitudes. Set
	\begin{equation}
		\mathbf D_f=\diag(e^{j2\pi\Delta f\ell T_s})_{\ell=0}^{n-1},
		\quad c_f=|\mathbf s\HH\mathbf D_f\mathbf s|^2.
	\end{equation}
	Circular Gaussian private interference remains temporally white after this phase rotation. Conditional on $\mathbf s$, the desired mean has parallel and orthogonal normalized energies $q_{\parallel}=q_uc_f$ and $q_{\perp}=q_u(1-c_f)$.
	
	\begin{proposition}[Distorted-tag law]
		\label{prop:cfo}
		The exact conditional rejection probability is
		\begin{equation}
			P_{\rm rej}=e^{-(q_{\parallel}+q_{\perp})}
			\sum_{r=0}^{\infty}\sum_{j=0}^{\infty}
			\frac{q_{\parallel}^r q_{\perp}^j}{r!j!}
			I_{t_n}(1+r,n-1+j).
			\label{eq:doublebeta}
		\end{equation}
		Moreover, for an ideal spherical tag,
		\begin{equation}
			\E[c_f]=\frac{n+\left|\sum_{\ell=0}^{n-1}e^{j2\pi\Delta f\ell T_s}\right|^2}{n(n+1)}.
			\label{eq:mean_coherence}
		\end{equation}
	\end{proposition}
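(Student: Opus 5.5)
Both parts follow the pattern of the proof of Theorem~\ref{thm:md}: rotate coordinates and decompose the received energy. For the first part, condition on $\mathbf s$. Under the offset model, the normalized received vector is
\[
\mathbf y_u/\sqrt{v_u^{\rm a}}=\sqrt{q_u}\,e^{j\phi}\mathbf D_f\mathbf s+\mathbf z,
\qquad \mathbf z\sim\CN(\mathbf0,\mathbf I_n).
\]
Here $\mathbf z$ stays white because a diagonal unitary phase rotation leaves circular white Gaussian disturbance invariant. This is the remark before the proposition. Choose a unitary basis whose first vector is $\mathbf s$ and whose second vector is the normalized component of $\mathbf D_f\mathbf s$ orthogonal to $\mathbf s$. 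Since $\norm{\mathbf D_f\mathbf s}=1$, the squared mean along $\mathbf s$ is $q_uc_f=q_\parallel$. The remaining mean energy, $q_u(1-c_f)=q_\perp$, lies entirely in the second coordinate, with zeros elsewhere. Rotational invariance of $\mathbf z$ then gives $T_u=A/(A+B)$ with $A$ and $B$ independent. Here $2A\sim\chi'^2_2(2q_\parallel)$ and $2B\sim\chi'^2_{2(n-1)}(2q_\perp)$; that is, $B$ is the orthogonal energy, now noncentral.

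Next I expand each noncentral chi-square as a Poisson mixture of central ones. Conditional on Poisson indices $r\sim\mathrm{Poi}(q_\parallel)$ and $j\sim\mathrm{Poi}(q_\perp)$, $A$ is Gamma$(1+r)$ and $B$ is Gamma$(n-1+j)$, independent and with unit scale. Hence $T_u$ is Beta$(1+r,n-1+j)$. The rejection event $T_u<t_n$ therefore has conditional probability $I_{t_n}(1+r,n-1+j)$. Averaging over the independent Poisson weights gives \eqref{eq:doublebeta}; the prefactor is $e^{-(q_\parallel+q_\perp)}=e^{-q_u}$. Exchanging sum and expectation is justified by Tonelli, since all terms are nonnegative. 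The degenerate cases $c_f\in\{0,1\}$ need no separate treatment because the basis construction then simply omits the second vector. This step is the main point needing care: the double series relies on the independence of $A$ and $B$ after the second rotation, which holds only because the mean lies in the span of the first two basis vectors.

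For \eqref{eq:mean_coherence}, write $\mathbf s=\mathbf z/\norm{\mathbf z}$ and $\mathbf s\HH\mathbf D_f\mathbf s=\sum_\ell d_\ell|s_\ell|^2$ with $d_\ell=e^{j2\pi\Delta f\ell T_s}$. Then $c_f=\sum_{\ell,m}d_\ell\bar d_m|s_\ell|^2|s_m|^2$. The vector $(|s_\ell|^2)_\ell$ is Dirichlet$(1,\ldots,1)$, because the $|z_\ell|^2$ are i.i.d.\ exponential. Its moments are $\E|s_\ell|^4=2/(n(n+1))$ and, for $\ell\ne m$, $\E|s_\ell|^2|s_m|^2=1/(n(n+1))$. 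Substituting these gives
\[
\E[c_f]=\frac{2n+\bigl(\left|\sum_\ell d_\ell\right|^2-n\bigr)}{n(n+1)},
\]
which is exactly \eqref{eq:mean_coherence}. The second part is therefore a routine moment computation; the only substantive step in the proposition is the rotation argument of the first part.
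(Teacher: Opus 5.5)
Your proof is correct and follows the paper's route. You rotate so that the mean lies in the span of $\mathbf s$ and the orthogonal part of $\mathbf D_f\mathbf s$, expand the independent, now noncentral, parallel and orthogonal energies as Poisson mixtures to obtain a Beta$(1+r,n-1+j)$ mixture, and get $\E[c_f]$ by expanding the squared sum with the moments $\E|s_\ell|^4=2/[n(n+1)]$ and $\E|s_\ell|^2|s_m|^2=1/[n(n+1)]$, as the paper does but in more detail. One small imprecision: only $c_f=1$ leaves the second basis vector undefined, while for $c_f=0$ that vector is $\mathbf D_f\mathbf s$ itself and only $q_\parallel$ vanishes, which the series handles with the convention $0^0=1$.
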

	\begin{proof}
		Orthogonal Gaussian components remain independent, but both energy terms now have noncentral chi-square laws. Applying their two independent Poisson mixtures gives \eqref{eq:doublebeta}. For \eqref{eq:mean_coherence}, use $\E|s_i|^4=2/[n(n+1)]$ and $\E(|s_i|^2|s_j|^2)=1/[n(n+1)]$ for $i\ne j$ and expand the squared sum.
	\end{proof}
	
	Equation~\eqref{eq:doublebeta} must be averaged over actual tags for an unconditional prediction; substituting \eqref{eq:mean_coherence} into it is generally not exact. For a fixed distorted template, as tag power alone tends to infinity, $T_u\to c_f$. If $c_f<t_n$, increasing tag power cannot prevent rejection. Frequency alignment is therefore a physical prerequisite, not a problem that can always be repaired by spending more power. Independent HAPS frequency errors can also change the private interference variance with time and require a more general colored-disturbance model.

	\label{sec:peak}
	The spherical construction has fixed block energy but variable sample magnitude. Its tag-component peak-to-average power ratio (PAPR) is
	\begin{equation}
		\mathsf P_s=n\max_{0\le\ell<n}|s[\ell]|^2.
	\end{equation}
	This ratio is shared by the tag component at every transmitting antenna with a nonzero tag weight. It does not bound the peak of the complete waveform, which also contains private data and AN.
	
	\begin{proposition}[Peak restriction with a recalibrated verifier]
		\label{prop:peak}
		For $1<\chi<n$, the probability that an ideal candidate satisfies $\mathsf P_s\le\chi$ is
		\begin{equation}
			p_\chi=\sum_{j=0}^{\lfloor n/\chi\rfloor}
			(-1)^j\binom nj\left(1-\frac{j\chi}{n}\right)^{n-1}.
			\label{eq:papr}
		\end{equation}
		Generate independent candidate tags until the peak condition is met. The expected number of candidates is $1/p_\chi$. For this conditioned tag distribution, the threshold
		\begin{equation}
			t_{n,\chi}=1-(\alpha p_\chi)^{1/(n-1)}
			\label{eq:peakthreshold}
		\end{equation}
		ensures false acceptance at most $\alpha$ against every independent attack waveform. The legitimate rejection law remains \eqref{eq:mdseries} with this new threshold.
	\end{proposition}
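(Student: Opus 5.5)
The plan has three parts, one for each claim: the peak-probability formula, the expected number of candidates, and the recalibrated false-acceptance bound.

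\emph{Peak probability.} Write the candidate as $\mathbf s=\mathbf z/\norm{\mathbf z}$ with $\mathbf z\sim\CN(\mathbf0,\mathbf I_n)$. The vector $(|s[0]|^2,\ldots,|s[n-1]|^2)=(E_0,\ldots,E_{n-1})/\sum_\ell E_\ell$ then has i.i.d.\ exponential $E_\ell$. It is therefore uniformly distributed on the simplex, i.e.\ Dirichlet$(1,\ldots,1)$, equivalently the spacings of $n-1$ uniform points on $[0,1]$. The event $\mathsf P_s\le\chi$ is $\max_\ell|s[\ell]|^2\le x$ with $x=\chi/n\in(1/n,1)$. Inclusion--exclusion over the events $A_\ell=\{|s[\ell]|^2>x\}$ gives the formula. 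For any $j$ indices, $\Prob(\cap_{\ell\in J}A_\ell)=(1-jx)_+^{n-1}$, which is the classical simplex-volume fact: subtracting $x$ from each of $j$ coordinates maps the region to a scaled simplex of side $1-jx$ in dimension $n-1$. Summing $(-1)^j\binom nj(1-jx)^{n-1}$ over $j\le\lfloor 1/x\rfloor=\lfloor n/\chi\rfloor$ gives \eqref{eq:papr}. I expect the only care needed here to be verifying the joint tail formula for the uniform simplex measure; I would cite or briefly derive it from the Dirichlet density.

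\emph{Expected number of candidates.} Independent candidates make the number of trials geometric with success probability $p_\chi>0$ (positive because $\chi>1$), so its mean is $1/p_\chi$. The accepted tag has the ideal spherical law conditioned on the event $\mathcal E=\{\mathsf P_s\le\chi\}$.

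\emph{False acceptance.} Under $\mathcal H_0$, condition on any nonzero $\mathbf y_u$, which is independent of the candidate sequence and hence of the accepted tag. Then
\[
\Prob(T_u\ge t\mid\mathcal E)=\frac{\Prob(\{T_u\ge t\}\cap\mathcal E)}{p_\chi}\le\frac{\Prob(T_u\ge t)}{p_\chi}=\frac{(1-t)^{n-1}}{p_\chi},
\]
where the last equality is Theorem~\ref{thm:fa}. Setting $t=t_{n,\chi}$ gives $(1-t)^{n-1}=\alpha p_\chi$, so the bound equals $\alpha$. Averaging over any independent attack distribution preserves the bound, and a zero waveform gives $T_u=0$, which only helps.

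\emph{Legitimate rejection.} Under $\mathcal H_1$, the proof of Theorem~\ref{thm:md} conditions on $\mathbf s$ and uses only $\norm{\mathbf s}=1$ and the rotation invariance of the white Gaussian disturbance. The conditional rejection probability is therefore \eqref{eq:mdseries} for every unit tag, and hence also for the peak-conditioned distribution, with $t_n$ replaced by $t_{n,\chi}$.

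The main subtlety is confirming that conditioning on $\mathcal E$ does not interact with the attack. This holds because the attack is independent of the entire candidate process, including the index at which sampling stops.
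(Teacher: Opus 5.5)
Your proposal is correct and follows the paper's proof essentially step for step. Both use uniformity of $(|s[\ell]|^2)_\ell$ on the simplex with inclusion--exclusion over the shifted simplices, obtain the bound $\Prob(T_u\ge t\mid \mathcal E)\le(1-t)^{n-1}/p_\chi$ by conditioning on the independent attack vector first, and note that the rotation argument of Theorem~\ref{thm:md} holds for every fixed unit template. Your extra details (the normalized-exponential justification of uniformity, the geometric-trial argument with $p_\chi>0$, and independence of the attack from the stopping index) fill in steps the paper leaves implicit.
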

	\begin{proof}
		The vector $(|s[0]|^2,\ldots,|s[n-1]|^2)$ is uniform on the probability simplex. For any selected $j$ coordinates, the probability that each exceeds $\chi/n$ is $(1-j\chi/n)^{n-1}$ when the shifted simplex is nonempty, and zero otherwise. Inclusion--exclusion gives \eqref{eq:papr}. If $C$ is the peak-admissibility event, then, for every fixed nonzero received vector,
		\begin{equation}
			\Prob(T_u\ge t\mid C)\le
			\frac{\Prob(T_u\ge t)}{p_\chi}
			=\frac{(1-t)^{n-1}}{p_\chi}.
		\end{equation}
		The independence assumption permits conditioning on the attack vector first. Equation~\eqref{eq:peakthreshold} follows. Under $\mathcal H_1$, the rotation argument in Theorem~\ref{thm:md} holds for \emph{every} fixed unit template, so conditioning on its peak does not change that law except through the threshold.
	\end{proof}
	
	The verifier and authorized platforms reproduce the same candidate sequence using a counter in the keyed expansion. This is rejection sampling of secret tag symbols, not an optimization algorithm. An inexpensive conservative alternative follows from
	\begin{equation}
		p_\chi\ge1-n(1-\chi/n)^{n-1}.
	\end{equation}
	Any strictly positive lower bound on $p_\chi$ can replace it in \eqref{eq:peakthreshold} while preserving the false-acceptance guarantee. Ordinary amplitude clipping is different: it changes the tag distribution and cannot inherit \eqref{eq:fa} unchanged.
	
	Figure~\ref{fig:peak} evaluates the exact simplex formula using high-precision arithmetic to avoid cancellation in its alternating sum. Independent normalized-exponential samples validate the candidate acceptance probability. The SINR penalty uses the same scalar root calibration with $\alpha p_\chi$ in place of $\alpha$. Thus the peak constraint has a quantifiable authentication cost rather than an unspecified implementation penalty. Network figures below use the unconditioned tag; the peak-controlled variant is a separately evaluated extension. Its frequency-offset behavior can still be computed from the actual template in \eqref{eq:doublebeta}, but the unconditioned mean in \eqref{eq:mean_coherence} no longer applies.
	
	\begin{figure}[t]
		\centering
		\subfloat[Acceptance probability.]{\includegraphics[width=.47\columnwidth]{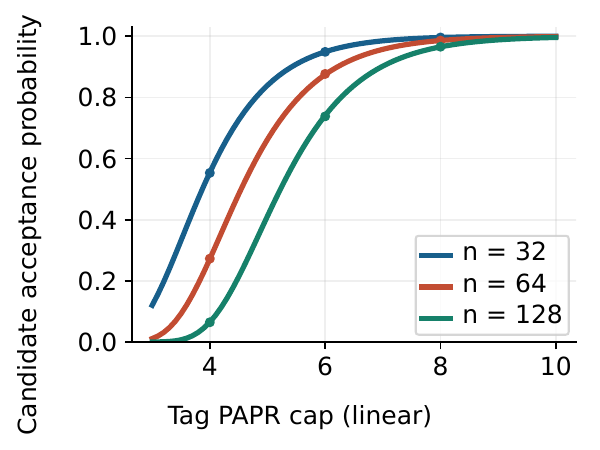}\label{fig:peak:a}}
		\hfill
		\subfloat[Additional required tag SINR.]{\includegraphics[width=.47\columnwidth]{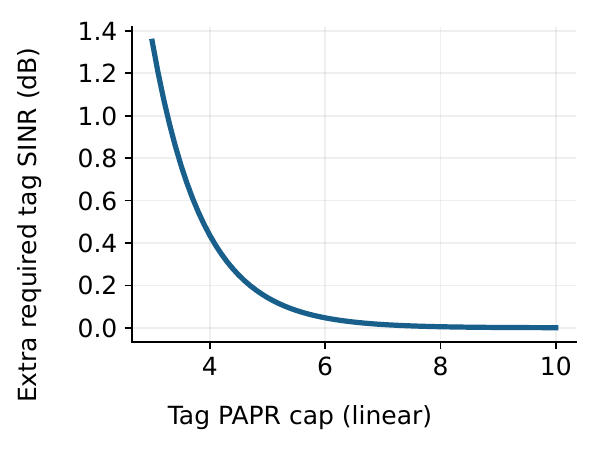}\label{fig:peak:b}}
		\caption{Peak-controlled tag extension. Panel (a) compares the exact probability in \eqref{eq:papr} with $100{,}000$ independent samples per length. In panel (a), solid curves show the analytical probabilities, while circular markers show Monte Carlo estimates. Panel (b) shows the additional required tag SINR for $n=64$, false acceptance at most $10^{-4}$, and rejection $10^{-2}$ after the threshold correction in \eqref{eq:peakthreshold}.}
		\label{fig:peak}
	\end{figure}
	
	\subsection{Confidentiality and AN Geometry}
	\label{sec:secrecy}
	
	Let $\mathbf B_s\in\mathbb C^{n\times(n-1)}$ be any semi-unitary basis satisfying $\mathbf B_s\HH\mathbf s=\mathbf0$. The verifier can form $\widetilde{\mathbf y}_u=\mathbf B_s\HH\mathbf y_u$ before or after its acceptance decision. Since
	\begin{equation}
		\mathbf B_s\HH\mathbf B_s=\mathbf I_{n-1},\qquad
		\mathbf B_s\mathbf B_s\HH=\mathbf I_n-\mathbf s\mathbf s\HH,
		\label{eq:projection}
	\end{equation}
	the tag disappears exactly under block coherence, without knowing $a_u$. A unitary rotation preserves independent Gaussian private symbols and noise in the remaining $n-1$ dimensions. Treating the other private streams as noise yields
	\begin{equation}
		\gamma_u=\frac{|\mathbf h_u\HH\mathbf v_u|^2}
		{\sum_{j\ne u}|\mathbf h_u\HH\mathbf v_j|^2+
			\mathbf h_u\HH\mathbf Q\mathbf h_u+\sigma_u^2}.
		\label{eq:legit_sinr}
	\end{equation}
	The nominal Gaussian spectral efficiency is $c_n\log_2(1+\gamma_u)$, where $c_n=(n-1)/n$. This factor accounts for the discarded data dimension; the receiver is not credited with perfect, free tag cancellation. Under \eqref{eq:null}, the AN term in \eqref{eq:legit_sinr} is zero. The projection loss and the sufficient rate target are detailed in Appendix~\ref{app:projection-rate}.

	For the secrecy calculation, grant the eavesdroppers $\mathbf s$, its effective channel, and all private streams other than the target user's. They can cancel these quantities and jointly combine their $E$ observations. Define
	\begin{equation}
		\begin{aligned}
			\mathbf f_u&=\mathbf G\mathbf v_u,\qquad
			\gamma_{e,u}=\mathbf f_u\HH\mathbf C_e^{-1}\mathbf f_u,\\
			\mathbf C_e&=\mathbf G\mathbf Q\mathbf G\HH+\sigma_e^2\mathbf I_E.
		\end{aligned}
		\label{eq:eve}
	\end{equation}
	The full covariance retains correlations across eavesdroppers. Unlike the legitimate receiver, the genie-aided receiver retains all $n$ dimensions. Accordingly, we use
	\begin{equation}
		\mathcal R_u=\pos{c_n\log_2(1+\gamma_u)-\log_2(1+\gamma_{e,u})}
		\label{eq:secrecy_metric}
	\end{equation}
	as the conservative Gaussian wiretap-rate benchmark. The underlying mutual-information difference is the standard random-coding quantity \cite{csiszar1978}; it is not the secrecy capacity of this multiuser network. The genie can only increase leakage relative to a receiver that must estimate or decode the tag and other users itself.
	
	The finite length $n$ is used exactly in tag detection and projection. Equation~\eqref{eq:secrecy_metric} remains an asymptotic Gaussian coding metric per RB symbol, evaluated on the projected channel. It does not predict the error probability of a particular length-$n$ data code. Authentication-dependent selection of blocks also precludes interpreting $(1-P_{\rm MD})\mathcal R_u$ as an established capacity. These distinctions avoid conflating a finite-block detector with a finite-block data-coding theorem.

	Consider a fixed beam/covariance shape and a joint power scale $p$: $\mathbf v_u=\sqrt p\,\overline{\mathbf v}_u$ and $\mathbf Q=p\overline{\mathbf Q}$. Set $\mathbf f=\mathbf G\overline{\mathbf v}_u$, $\mathbf A_e=\mathbf G\overline{\mathbf Q}\mathbf G\HH$, and let $\mathbf\Pi_0$ project onto $\nullsp(\mathbf A_e)$.
	
	\begin{theorem}[AN rank and orientation criterion]
		\label{thm:rank}
		As $p\to\infty$,
		\begin{equation}
			\gamma_{e,u}(p)=\frac{p}{\sigma_e^2}\norm{\mathbf\Pi_0\mathbf f}^2
			+\mathbf f\HH\mathbf A_e^{\dagger}\mathbf f+O(p^{-1}).
			\label{eq:rank_law}
		\end{equation}
		The colluding SINR remains bounded exactly when $\mathbf f\in\range(\mathbf A_e)$. A full-rank $\mathbf A_e$ is sufficient but not necessary.
	\end{theorem}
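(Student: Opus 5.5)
Under the joint scaling, $\gamma_{e,u}(p)=p\,\mathbf f\HH(p\mathbf A_e+\sigma_e^2\mathbf I_E)^{-1}\mathbf f$. The plan is to diagonalize $\mathbf A_e$ and read off the asymptotics one eigen-direction at a time.

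Since $\mathbf A_e=\mathbf G\overline{\mathbf Q}\mathbf G\HH\succeq0$ is Hermitian, write its eigendecomposition as $\mathbf A_e=\sum_{i=1}^{r}\lambda_i\mathbf u_i\mathbf u_i\HH$ with $\lambda_i>0$. Complete the basis with an orthonormal basis of $\nullsp(\mathbf A_e)$, so that $\mathbf\Pi_0=\mathbf I_E-\sum_i\mathbf u_i\mathbf u_i\HH$. The resolvent is then diagonal in this basis:
\begin{equation*}
p\,(p\mathbf A_e+\sigma_e^2\mathbf I_E)^{-1}=\frac{p}{\sigma_e^2}\mathbf\Pi_0+\sum_{i=1}^r\frac{p}{p\lambda_i+\sigma_e^2}\mathbf u_i\mathbf u_i\HH .
\end{equation*}
Sandwiching with $\mathbf f$ gives the exact identity
\begin{equation*}
\gamma_{e,u}(p)=\frac{p}{\sigma_e^2}\norm{\mathbf\Pi_0\mathbf f}^2+\sum_{i=1}^r\frac{p\,|\mathbf u_i\HH\mathbf f|^2}{p\lambda_i+\sigma_e^2}.
\end{equation*}
The null-space part already matches the first term of \eqref{eq:rank_law}.

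For the range part, expand each coefficient as
\begin{equation*}
\frac{p}{p\lambda_i+\sigma_e^2}=\frac{1}{\lambda_i}-\frac{\sigma_e^2}{\lambda_i(p\lambda_i+\sigma_e^2)}.
\end{equation*}
The correction is bounded in magnitude by $\sigma_e^2/(\lambda_i^2p)=O(p^{-1})$. The sum is finite and each $\lambda_i$ is a fixed positive number independent of $p$, so the total remainder is $O(p^{-1})$, with a constant depending only on $\lambda_{\min}^{+}$ and $\norm{\mathbf f}$. Since $\sum_i\lambda_i^{-1}\mathbf u_i\mathbf u_i\HH=\mathbf A_e^\dagger$, the leading constant is $\mathbf f\HH\mathbf A_e^\dagger\mathbf f$. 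This establishes \eqref{eq:rank_law}.

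The boundedness criterion then follows from the exact identity. The range sum is nonnegative and bounded by $\mathbf f\HH\mathbf A_e^\dagger\mathbf f$ for every $p$, and the null-space term grows linearly unless $\mathbf\Pi_0\mathbf f=\mathbf0$. Hence $\gamma_{e,u}$ stays bounded if and only if $\mathbf f\in\range(\mathbf A_e)=\nullsp(\mathbf A_e)^\perp$.

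Sufficiency of full rank is immediate, since then $\mathbf\Pi_0=\mathbf0$. For non-necessity, give an explicit example: take $E=2$, $\mathbf A_e=\diag(1,0)$ and $\mathbf f=[1,0]\TT$. To realize it, choose $\mathbf G$ and $\overline{\mathbf Q}$ generic with $\overline{\mathbf Q}$ rank one, aligned so that $\mathbf G\overline{\mathbf Q}\mathbf G\HH\propto\mathbf f\mathbf f\HH$. This is achievable whenever $\mathbf G\mathbf F$ has $\mathbf f$ in its range.

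There is no real obstacle here. The only point needing care is keeping the remainder uniform, which the finiteness of the spectrum handles. It is also worth stating the exact identity first so that the "exactly when" claim does not rest on the asymptotic expansion alone.
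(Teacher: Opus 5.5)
Your proof is correct and follows essentially the same route as the paper. Both diagonalize $\mathbf A_e$, write the exact spectral identity for $\gamma_{e,u}(p)$, expand the positive-eigenvalue terms, and read boundedness off the null-space coefficients. Your explicit $O(p^{-1})$ remainder bound and the concrete rank-deficient example for non-necessity add detail that the paper leaves implicit.
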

	\begin{proof}
		Diagonalize $\mathbf A_e=\mathbf U\diag(a_i)\mathbf U\HH$, $a_i\ge0$, and put $\mathbf z=\mathbf U\HH\mathbf f$. Then
		\begin{equation}
			\gamma_{e,u}(p)=\sum_{i:a_i>0}\frac{p|z_i|^2}{pa_i+\sigma_e^2}
			+\frac{p}{\sigma_e^2}\sum_{i:a_i=0}|z_i|^2.
			\label{eq:spectral}
		\end{equation}
		Expand each positive-eigenvalue term; the zero-eigenvalue terms are already exact. The range condition is equivalent to all zero-eigenvalue coefficients being zero.
	\end{proof}
	
	If legitimate multiuser interference and AN are nulled, $\gamma_u$ grows proportionally to $p$. When an exposed Eve component exists, the two logarithms in \eqref{eq:secrecy_metric} have prelogs $c_n$ and one. The positive-part secrecy metric eventually becomes zero. When the desired direction is covered, the Eve term saturates and the legitimate logarithm can grow. This asymptotic statement applies to the stated fixed-shape power scaling, not to a redesign that changes its spatial null spaces with $p$.
	
	Since $\rank(\mathbf A_e)\le\min(E,d,\rank(\mathbf A))$, insufficient AN dimension creates a potential escape subspace. Dimension counting alone does not prove that the target signal occupies it. Conversely, many transmit antennas do not establish protection unless their AN images cover the desired Eve direction. For example, in six Eve dimensions, a rank-four AN covariance can either protect or expose the desired stream depending on its orientation.

	The following specialization makes the distinction operational. Consider one legitimate user with fixed unit-norm data and tag beams $\mathbf v_0,\mathbf w_0$, a unit-trace nulling AN shape $\mathbf Q_0$, and powers $(p_b,p_t,p_a)$. Let $g_b=|\mathbf h\HH\mathbf v_0|^2>0$, $g_t=|\mathbf h\HH\mathbf w_0|^2>0$, $\mathbf f_0=\mathbf G\mathbf v_0$, and $\mathbf A_0=\mathbf G\mathbf Q_0\mathbf G\HH$. 
	Here, $\mathbf\Pi_0$ denotes the orthogonal projector onto
	$\nullsp(\mathbf A_0)$.
	For a fixed Eve cap $\zeta>0$ and rate requirement $r_0>0$, define
	\begin{equation}
		\Gamma(n,\zeta)=2^{[r_0+\log_2(1+\zeta)]/c_n}-1.
		\label{eq:Gamma}
	\end{equation}
	For these fixed shapes, minimum total power subject to the calibrated authentication constraint and the separate secrecy constraints $\gamma_u\ge\Gamma(n,\zeta)$ and $\gamma_{e,u}\le\zeta$ is attained at
	\begin{equation}
		p_b^\star=\frac{\Gamma\sigma_u^2}{g_b},\qquad
		p_t^\star=\frac{\rho_n^\star(p_b^\star g_b+\sigma_u^2)}{g_t},
		\label{eq:fixed_powers}
	\end{equation}
	and the smallest $p_a\ge0$ for which
	\begin{equation}
		p_b^\star\mathbf f_0\HH(p_a\mathbf A_0+\sigma_e^2\mathbf I)^{-1}\mathbf f_0\le\zeta.
		\label{eq:an_scalar}
	\end{equation}
	The left side is continuous and nonincreasing, and strictly decreasing if the desired vector has a component in the positive-eigenvalue subspace. Its limiting floor is $p_b^\star\norm{\mathbf\Pi_0\mathbf f_0}^2/\sigma_e^2$. A cap below that floor is impossible for these shapes. Equality to the floor requires infinite AN unless the positive-eigenvalue component is zero; a strict cap above the floor has a finite scalar solution whenever zero AN does not already suffice. These statements follow from \eqref{eq:spectral} with data power held fixed. Increasing data power above its required minimum can only increase the required tag power and tighten the leakage constraint, proving \eqref{eq:fixed_powers}. Thus a one-dimensional bisection decides this reduced design. Its optimum need not minimize power under the original secrecy-rate constraint in \eqref{eq:secrecy_metric}. Appendix~\ref{app:fixed-beam} derives the fixed-beam powers and the AN leakage floor.
	
	\subsection{Minimum-Power Beamforming Problem}
	\label{sec:design}
	
	Fix $(n,\alpha,\beta)$, a channel snapshot, and $\zeta>0$. Require $\mathcal R_u\ge r_u>0$ by the sufficient constraints $\gamma_{e,u}\le\zeta$ and $\gamma_u\ge\Gamma_u$, where \eqref{eq:Gamma} uses $r_u$. The design is
	\begin{align}
		\min_{\{\mathbf v_u\},\mathbf w,\mathbf A\succeq0}\quad&
		\sum_u\norm{\mathbf v_u}^2+\norm{\mathbf w}^2+\tr(\mathbf A)
		\label{eq:problem}\\[-3pt]
		\text{subject to}\quad& P_k\le P_k^{\max},\quad k=1,\ldots,K,\nonumber\\
		& |\mathbf h_u\HH\mathbf v_u|^2\ge\Gamma_u
		\left(\sum_{j\ne u}|\mathbf h_u\HH\mathbf v_j|^2+\sigma_u^2\right),\nonumber\\
		&|\mathbf h_u\HH\mathbf w|^2\ge\rho_n^\star
		\left(\sum_j|\mathbf h_u\HH\mathbf v_j|^2+\sigma_u^2\right),\nonumber\\
		&\mathbf v_u\HH\mathbf G\HH\mathbf C_e^{-1}\mathbf G\mathbf v_u\le\zeta,
		\quad u=1,\ldots,U.\nonumber
	\end{align}
	The covariance obeys \eqref{eq:null}. All users in this scheduled group are required to meet their constraints; dropping difficult users is not a hidden source of gains. The objective is physical RF power, with no distance regularizer or tag-concentration proxy. It excludes propulsion and fixed avionics power.
	
	The cap $\zeta$ parameterizes a conservative slice of the secrecy region. Searching a finite cap grid can improve the solution, but a bound for one slice is not a lower bound for the unrestricted secrecy-rate problem. Similarly, $n$ describes an authentication-duration versus transmit-power tradeoff. We report this tradeoff without asserting that the largest $n$ minimizes packet energy or latency.
	
	\subsection{Semidefinite Relaxation and Beam Recovery}
	Let $\mathbf H_u=\mathbf h_u\mathbf h_u\HH$, introduce $\mathbf X_u=\mathbf v_u\mathbf v_u\HH$ and $\mathbf W=\mathbf w\mathbf w\HH$, and temporarily drop their rank-one constraints. The data and authentication inequalities become
	\begin{align}
		\tr(\mathbf H_u\mathbf X_u)&\ge\Gamma_u
		\left(\sum_{j\ne u}\tr(\mathbf H_u\mathbf X_j)+\sigma_u^2\right),
		\label{eq:sdr_data}\\
		\tr(\mathbf H_u\mathbf W)&\ge\rho_n^\star
		\left(\sum_j\tr(\mathbf H_u\mathbf X_j)+\sigma_u^2\right).
		\label{eq:sdr_tag}
	\end{align}
	For a rank-one private matrix, the leakage cap is equivalent to
	\begin{equation}
		\mathbf G\mathbf X_u\mathbf G\HH\preceq
		\zeta(\mathbf G\mathbf F\mathbf A\mathbf F\HH\mathbf G\HH+
		\sigma_e^2\mathbf I_E).
		\label{eq:sdr_eve}
	\end{equation}
	Indeed, whitening by $\mathbf C_e^{-1/2}$ makes the left side rank one with sole nonzero eigenvalue $\gamma_{e,u}$. Retaining \eqref{eq:sdr_eve} for unrestricted $\mathbf X_u\succeq0$ defines the relaxation. Its objective and per-HAPS power constraints are linear traces, so it is a semidefinite program (SDP). The rank-one equivalence of the Eve matrix constraint is detailed in Appendix~\ref{app:eve-lmi}.
	
	The private-beam recovery below uses the rank-one covariance
	construction employed in joint radar--communication beamforming
	\cite{liu2020joint}. The construction preserves the intended
	received power while reducing the covariance in the
	positive-semidefinite order. We verify its applicability to the
	present problem, including the calibrated authentication
	constraints, colluding-eavesdropper leakage caps, and per-HAPS
	power limits. The shared-tag covariance remains subject to
	separate rank recovery.
	
	\begin{theorem}[Only the tag needs rank recovery]
		\label{thm:reduction}
		Given a feasible relaxed point with positive target rates, define
		\begin{equation}
			\widehat{\mathbf v}_u=
			\frac{\mathbf X_u\mathbf h_u}{\sqrt{\mathbf h_u\HH\mathbf X_u\mathbf h_u}}.
			\label{eq:rank_reduce}
		\end{equation}
		Replacing every $\mathbf X_u$ by $\widehat{\mathbf v}_u\widehat{\mathbf v}_u\HH$ preserves feasibility of the relaxation and does not increase power. If $\mathbf W$ is rank one, this yields a globally optimal physical solution of the fixed-$(n,\zeta)$ problem whenever the relaxed point is optimal.
	\end{theorem}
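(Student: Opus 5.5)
The plan is to verify, constraint by constraint, that the rank-one surrogate $\widehat{\mathbf X}_u=\widehat{\mathbf v}_u\widehat{\mathbf v}_u\HH$ satisfies two properties. First, $\tr(\mathbf H_u\widehat{\mathbf X}_u)=\tr(\mathbf H_u\mathbf X_u)$. Second, $\mathbf X_u-\widehat{\mathbf X}_u\succeq0$. Feasibility then follows from monotonicity of each constraint. The positive-target-rate hypothesis gives $\Gamma_u>0$ and $\sigma_u^2>0$, so \eqref{eq:sdr_data} forces $\mathbf h_u\HH\mathbf X_u\mathbf h_u>0$ and \eqref{eq:rank_reduce} is well defined.

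The first property is a direct computation: $\mathbf h_u\HH\widehat{\mathbf v}_u\widehat{\mathbf v}_u\HH\mathbf h_u=(\mathbf h_u\HH\mathbf X_u\mathbf h_u)^2/(\mathbf h_u\HH\mathbf X_u\mathbf h_u)$. For the second property, I would write $\mathbf X_u=\mathbf L\mathbf L\HH$ and set $\mathbf b=\mathbf L\HH\mathbf h_u$. This gives $\mathbf X_u-\widehat{\mathbf X}_u=\mathbf L(\mathbf I-\mathbf b\mathbf b\HH/\norm{\mathbf b}^2)\mathbf L\HH\succeq0$, since the middle factor is an orthogonal projector. This is the Cauchy--Schwarz step of \cite{liu2020joint}.

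With these two facts, the constraints go through in turn. The desired term on the left of \eqref{eq:sdr_data} is unchanged. The interference terms $\tr(\mathbf H_u\mathbf X_j)$, $j\ne u$, can only decrease, because $\mathbf H_u\succeq0$ and $\widehat{\mathbf X}_j\preceq\mathbf X_j$. The right side of \eqref{eq:sdr_tag} likewise does not increase, while $\mathbf W$ is untouched. In \eqref{eq:sdr_eve}, $\mathbf G\widehat{\mathbf X}_u\mathbf G\HH\preceq\mathbf G\mathbf X_u\mathbf G\HH$ because congruence preserves the Loewner order, and $\mathbf A$ is unchanged. The per-HAPS powers satisfy $\tr(\mathbf D_k\widehat{\mathbf X}_u)\le\tr(\mathbf D_k\mathbf X_u)$ since $\mathbf D_k\succeq0$, and the objective does not increase by the same argument with $\mathbf I$.

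For the global-optimality claim, suppose the relaxed point is optimal and $\mathbf W=\mathbf w\mathbf w\HH$. The reduced point is feasible for the relaxation and has objective no larger, so it is also relaxation-optimal. All of its matrices are now rank one. By Appendix~\ref{app:eve-lmi}, \eqref{eq:sdr_eve} is then equivalent to the original leakage cap $\gamma_{e,u}\le\zeta$, and the remaining constraints coincide with those of \eqref{eq:problem}. Hence $(\{\widehat{\mathbf v}_u\},\mathbf w,\mathbf A)$ is feasible for \eqref{eq:problem}. Since the relaxation value lower-bounds \eqref{eq:problem}, this point is globally optimal.

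The main obstacle I expect is the Eve constraint. One must check that shrinking $\mathbf X_u$ in the PSD order keeps \eqref{eq:sdr_eve}, which holds because its right side does not depend on $\mathbf X_u$. One must also check that the rank-one equivalence with the scalar leakage cap holds exactly, including the case where $\mathbf G\mathbf v_u$ is zero.
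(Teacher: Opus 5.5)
Your proposal is correct and follows the paper's proof essentially step for step. The paper uses the specific factor $\mathbf X_u^{1/2}$ with the unit vector $\mathbf t=\mathbf X_u^{1/2}\mathbf h_u/\sqrt{\mathbf h_u\HH\mathbf X_u\mathbf h_u}$ where you use a generic $\mathbf L$, then makes the same preserved-intended-power and PSD-order monotonicity argument, and your appeal to Appendix~\ref{app:eve-lmi} in the optimality step only makes explicit what the paper's final sentence leaves implicit.
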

	\begin{proof}
		The denominator is positive by \eqref{eq:sdr_data}. Let $\mathbf t=\mathbf X_u^{1/2}\mathbf h_u/\sqrt{\mathbf h_u\HH\mathbf X_u\mathbf h_u}$, so $\norm{\mathbf t}=1$. Then
		\begin{equation}
			\mathbf\Delta_u\triangleq\mathbf X_u-
			\widehat{\mathbf v}_u\widehat{\mathbf v}_u\HH
			=\mathbf X_u^{1/2}(\mathbf I-\mathbf t\mathbf t\HH)\mathbf X_u^{1/2}\succeq0,
		\end{equation}
		and $\mathbf h_u\HH\mathbf\Delta_u\mathbf h_u=0$. The intended power is unchanged. Every interference power, tag disturbance, and per-HAPS power weakly decreases, since their coefficient matrices are positive semidefinite. The left side of \eqref{eq:sdr_eve} also decreases in matrix order. Applying this argument to all users proves the claim. When the tag is rank one, the recovered point is feasible for the original problem and meets its relaxed lower bound.
	\end{proof}
	
	This reduction is a model-specific consequence of the positive-semidefinite ordering of the calibrated constraints; its applicability depends on the direction of \emph{every} such constraint. In particular, it would fail if the model imposed a lower bound on interference or treated a private covariance reduction as additional useful authentication energy. The theorem does not assert rank-one optimality of a multicast tag with arbitrary user and power constraints.

	For a higher-rank $\mathbf W$, use its dominant eigenvector and a fixed Fourier-phase list in its positive eigenspace as tag directions. This is a deterministic initialization list, not stochastic search. For each nonzero candidate direction $\mathbf d$ and recovered private beams, set
	\begin{equation}
		\mathbf w=\sqrt{\max_u
			\frac{\rho_n^\star(\sum_j|\mathbf h_u\HH\widehat{\mathbf v}_j|^2+\sigma_u^2)}
			{|\mathbf h_u\HH\mathbf d|^2}}\,\mathbf d.
		\label{eq:scale}
	\end{equation}
	Discard directions with a zero useful response or a violated power budget. Among the remaining directions choose the lowest-power point. If the list fails, solve the fixed-tag conic design described in Section~\ref{sec:results} to obtain a second deterministic initialization. If both fail, report recovery failure; this does not prove infeasibility of the nonconvex problem.
	
	Starting from a feasible tag $\mathbf w^{(i)}$, replace its desired power by the affine lower support
	\begin{equation}
		\ell_u(\mathbf w;\mathbf w^{(i)})=
		2\operatorname{Re}\{(\mathbf h_u\HH\mathbf w^{(i)})^*\mathbf h_u\HH\mathbf w\}
		-|\mathbf h_u\HH\mathbf w^{(i)}|^2.
		\label{eq:affine}
	\end{equation}
	Enforce $\ell_u\ge\rho_n^\star(\sum_j|\mathbf h_u\HH\mathbf v_j|^2+\sigma_u^2)$. Each private beam can be phase rotated without changing any constraint, so impose $\operatorname{Im}(\mathbf h_u\HH\mathbf v_u)=0$ and
	\begin{equation}
		\left\|[\{\mathbf h_u\HH\mathbf v_j\}_{j\ne u},\sigma_u]\right\|_2
		\le\frac{\operatorname{Re}(\mathbf h_u\HH\mathbf v_u)}{\sqrt{\Gamma_u}}.
		\label{eq:soc}
	\end{equation}
	The leakage constraint has the Schur form
	\begin{equation}
		\begin{bmatrix}
			\zeta\mathbf C_e&\mathbf G\mathbf v_u\\
			\mathbf v_u\HH\mathbf G\HH&1
		\end{bmatrix}\succeq0.
		\label{eq:schur}
	\end{equation}
	Together with convex power bounds these define a convex conic subproblem. Since $|\mathbf h_u\HH\mathbf w|^2-\ell_u=|\mathbf h_u\HH(\mathbf w-\mathbf w^{(i)})|^2\ge0$, every accepted subproblem solution remains feasible for the original constraints. The previous point is feasible for the next subproblem, so exact solutions give a nonincreasing power sequence bounded below by zero. This proves convergence of objective values, not global convergence or convergence to a unique beam. Numerical implementations must recheck the original expressions before accepting a point.
	
	\begin{algorithm}[t]
		\caption{Calibrated deterministic design for one layout}
		\label{alg:design}
		\small
		\textbf{Input:} Channels, budgets, $(n,\alpha,\beta,\zeta)$, rates $r_u$.
		\begin{enumerate}
			\item Compute $t_n$, $\rho_n^\star$, $\Gamma_u$, and null basis $\mathbf F$.
			\item Solve the SDR \eqref{eq:sdr_data}--\eqref{eq:sdr_eve} and obtain the lower bound \eqref{eq:dualbound}. Report slice infeasibility only with a valid conic certificate.
			\item Apply \eqref{eq:rank_reduce}, \eqref{eq:scale}, and, if needed, the fixed-tag initialization. If no feasible point is found, report recovery failure and retain the lower bound.
			\item Until the power improvement is below tolerance, solve the inner problem \eqref{eq:affine}--\eqref{eq:schur}; check the original constraints and retain a feasible non-worsening point.
			\item Return beams, AN covariance, physical power, and the lower-bound gap.
		\end{enumerate}
	\end{algorithm}
	
	\subsection{Lower Bounds and Deployment Catalog}
	An SDP solver's primal objective is not itself a mathematically certified lower bound when solved inaccurately. A simple correction makes the reported comparison more transparent. Let $\lambda_u,\mu_u,\varphi_k\ge0$ multiply respectively \eqref{eq:sdr_data}, \eqref{eq:sdr_tag}, and the power constraints; let $\mathbf S_u\succeq0$ multiply the Eve matrix inequality. Put $\mathbf D=\sum_k\varphi_k\mathbf D_k$ and
	\begin{align}
		\mathbf R_W&=\mathbf I+\mathbf D-\sum_u\mu_u\mathbf H_u,\label{eq:dualW}\\
		\mathbf R_{X_u}&=\mathbf I+\mathbf D-\lambda_u\mathbf H_u
		+\sum_{j\ne u}\lambda_j\Gamma_j\mathbf H_j\nonumber\\
		&\quad+\rho_n^\star\sum_j\mu_j\mathbf H_j+\mathbf G\HH\mathbf S_u\mathbf G,\\
		\mathbf R_A&=\mathbf F\HH\left(\mathbf I+\mathbf D-
		\zeta\sum_u\mathbf G\HH\mathbf S_u\mathbf G\right)\mathbf F.
		\label{eq:dualmatrices}
	\end{align}
	The constant part of the Lagrangian is
	\begin{equation}
		\begin{aligned}
			d_0&=\sum_u(\lambda_u\Gamma_u+\mu_u\rho_n^\star)\sigma_u^2\\
			&\quad-\zeta\sigma_e^2\sum_u\tr(\mathbf S_u)-\sum_k\varphi_kP_k^{\max}.
		\end{aligned}
	\end{equation}
	Let $\delta\ge0$ be at least the magnitude of the most negative eigenvalue among the three types of matrices in \eqref{eq:dualW}--\eqref{eq:dualmatrices}. Then
	\begin{equation}
		P_{\rm LB}=\max\left\{0,d_0-\delta\sum_kP_k^{\max}\right\}
		\le P_{\rm optimum}.
		\label{eq:dualbound}
	\end{equation}
	To see this, each variable covariance is positive semidefinite and their total trace is at most $\sum_kP_k^{\max}$. The Lagrangian is bounded below by $d_0-\delta\sum_kP_k^{\max}$ and is no larger than the objective at a primal feasible point. This argument allows nonzero stationarity residuals. In computation, scalar multipliers are clipped nonnegative, matrix multipliers are projected onto the positive semidefinite cone, and eigenvalues are recomputed. Floating-point eigenvalue estimates remain numerical certificates rather than interval-arithmetic proofs.

	Let $\mathfrak D$ be a finite catalog of admissible HAPS layouts. Each entry specifies positions that already satisfy altitude, separation, and any imposed displacement limits. Recompute the physical channels and solve Algorithm~\ref{alg:design} for each entry. For fixed $(n,\zeta)$, the smallest feasible power is an upper bound on the catalog optimum, and $\min_{j\in\mathfrak D}P_{{\rm LB},j}$ is a lower bound. Their difference quantifies the remaining catalog design uncertainty. This is not a claim of global optimization over continuous three-dimensional flight paths.
	
	The finite calibration uses a scalar noncentral-$F$ evaluation and root search. The SDR has $(U+1)N^2+d^2$ real Hermitian degrees of freedom before structural elimination, $U$ Eve matrix inequalities of order $E$, and the covariance cones. The refinement uses $2N(U+1)+d^2$ real variables before phase constraints, with $U$ cones of order $E+1$. Dense factorization and eigendecomposition costs grow cubically in the corresponding matrix dimensions; actual solver cost depends on sparsity and iteration count. A catalog multiplies the number of conic solves by its number of entries. No claim of linear online complexity is made, and there is no offline training cost.
	
	\section{Simulation}
	\label{sec:results}
	
	This section specifies the simulation settings and baselines, validates the authentication analysis, and examines frequency-offset sensitivity and AN geometry. It then evaluates transmit power and layout selection and discusses the limitations.
	\subsection{Models, Settings, and Baselines}
	All experiments use newly generated synthetic data. Waveform simulations test the detector independently of the optimization code. We share our code in \cite{hapsCode2026}. Network experiments use ten seeds, $100$--$109$, with identical channels for all compared methods within each seed. Every scheduled group contains three users; the results do not assert large-network scalability. Table~\ref{tab:settings} specifies the settings.
	In the stopping criterion, $P_{\mathrm obj}$ denotes the current total transmit power, i.e., the objective value in \eqref{eq:problem}.
	
	\begin{table}[t]
		\centering\caption{Numerical settings; one active group RB}
		\label{tab:settings}\small
		\begin{tabular}{@{}p{.55\columnwidth}p{.39\columnwidth}@{}}\toprule
			Quantity&Value\\\midrule
			HAPSs / antennas per HAPS / users&$3/4/3$\\
			Colluding single-antenna receivers&$E=8$\\
			Altitude; default ring radius&$20$ km; $4$ km\\
			Candidate ring radii&$2,4,6$ km\\
			User and Eve coordinates&Uniform in $[-4,4]^2$ km$^2$\\
			Carrier wavelength; antenna spacing&$0.15$ m; $0.075$ m\\
			Rician factor; shadowing&$10$ linear; none\\
			Receiver noise per RB&$4\times10^{-14}$ W\\
			Per-HAPS RF power limit&$10$ W\\
			Authentication $(\alpha,\beta)$&$(10^{-4},10^{-2})$\\
			Tag length in resource tests&$16,32,64,128$\\
			Private secrecy-rate requirement&$0.5$ bit/s/Hz per user\\
			Eve SINR cap $\zeta$&$0.3$ (unless varied)\\
			Network seeds&$100$--$109$\\
			Waveform seed&$261019$\\
			Refinement limit; stopping threshold&$12$; $10^{-5}\max(1,P_{\mathrm obj})$\\\bottomrule
		\end{tabular}
	\end{table}
	
	HAPSs lie at equal angles on the specified horizontal ring. Let $\lambda_0$ denote the carrier wavelength. With distance $d_{k,r}$ to a ground receiver $r$, the power gain is $\beta_{k,r}=(\lambda_0/(4\pi d_{k,r}))^2$. A horizontal uniform linear array has unit-modulus steering entries
	\begin{equation}
		[\mathbf a_{k,r}]_m=
		e^{-j2\pi d_{k,r}/\lambda_0}
		e^{-j\pi m(x_r-x_k)/d_{k,r}}.
	\end{equation}
	Here $m=0,\ldots,M-1$, and $x_r$ and $x_k$ are the Cartesian $x$-coordinates of receiver $r$ and HAPS $k$, respectively. The physical row-channel block is
	\begin{equation}
		\sqrt{\beta_{k,r}}\left(\sqrt{\frac{\kappa}{1+\kappa}}\mathbf a_{k,r}\TT
		+\sqrt{\frac1{1+\kappa}}\mathbf z_{k,r}\TT\right),
	\end{equation}
	where $\kappa$ is the linear Rician factor and the independent scattering vectors satisfy $\mathbf z_{k,r}\sim\CN(\mathbf0,\mathbf I_M)$. This is a transparent, conditioned line-of-sight Rician benchmark. Atmospheric attenuation, correlated shadowing, polarization, antenna element patterns, and CSI estimation are not included. Channels are divided by the noise standard deviation for numerical conditioning; optimized powers retain units of watts.
	
	The strong \emph{fixed-tag} baseline fixes only the tag direction to normalized $\mathbf H\HH\mathbf 1_U$. It optimizes all private beams, AN covariance, and tag power under the same constraints through a convex conic problem. Thus the comparison isolates multicast tag design instead of confounding it with an arbitrary power split. An additional \emph{fixed-beam} reference uses zero-forcing private directions, the same tag direction, and isotropic null-space AN, optimizing their scalar powers. A \emph{no-AN} ablation fixes $\mathbf Q=\mathbf0$ and reoptimizes the beams.
	
	The implementation uses $32$ Fourier-phase candidates in addition to the dominant-eigenvector candidate. Numerical tag rank counts eigenvalues of $\mathbf W$ above $\max\{10^{-5}\lambda_{\max}(\mathbf W),10^{-9}\}$, where $\lambda_{\max}(\mathbf W)$ is its largest eigenvalue. The amplitude scale in \eqref{eq:scale} is multiplied by $1+3\times10^{-7}$ as a numerical feasibility margin.
	
	\subsection{Authentication and Waveform Results}
	Figure~\ref{fig:detection} compares \eqref{eq:mdf} with $120{,}000$ independently simulated received blocks per plotted operating point. Error bars are binomial Wilson intervals at nominal $95\%$ confidence. The simulation rotates the known template to the first coordinate and generates a full complex Gaussian received vector; it does not sample the noncentral-$F$ law directly. Separately, the Poisson-beta series and noncentral-$F$ evaluator are checked against each other.
	
	For $\alpha=10^{-4}$ and $\beta=10^{-2}$, required tag SINRs for $n=16,32,64,128$ are $1.8614$, $0.77685$, $0.35832$, and $0.17246$, respectively. Moving from $16$ to $64$ symbols reduces required per-symbol SINR by approximately $7.16$ dB. However, normalized block energy decreases only from $29.7825$ to $22.9326$, approaching the nonzero limit $q_\infty^\star=21.2805$. Thus the processing gain cannot be interpreted as an equal reduction in block tag energy.
	
	\begin{figure}[t]
		\centering
		\subfloat[Legitimate rejection probability.]{\includegraphics[width=.47\columnwidth]{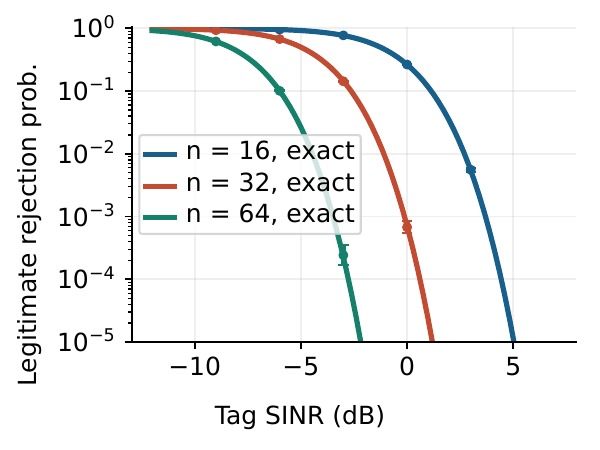}\label{fig:detection:a}}
		\hfill
		\subfloat[Required normalized tag energy.]{\includegraphics[width=.47\columnwidth]{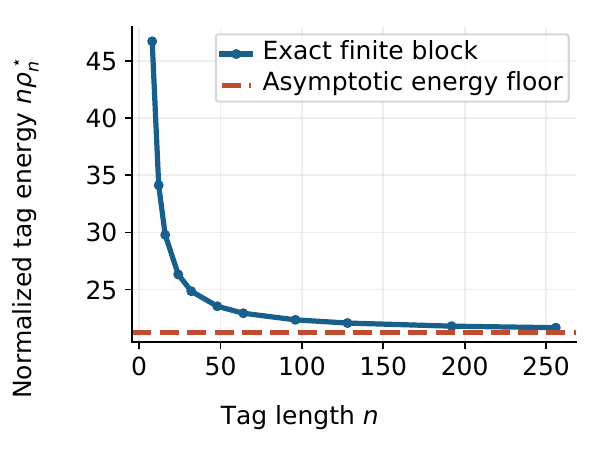}\label{fig:detection:b}}
		\caption{Exact authentication calibration. Panel (a) compares \eqref{eq:mdf} with independent waveform Monte Carlo at false acceptance $10^{-4}$. In panel (a), solid curves show the analytical rejection probabilities, while circular markers show Monte Carlo estimates with 95\% Wilson confidence intervals. Panel (b) shows $n\rho_n^\star$ for rejection $10^{-2}$ and the limit in \eqref{eq:energyfloor}. Zero observed rare-event counts are not plotted as zero on the logarithmic axis. In panel (b), circular markers denote analytical values computed from the required-SINR relation, not Monte Carlo estimates.}
		\label{fig:detection}
	\end{figure}
	
	Figure~\ref{fig:attacks} uses impulse, constant-tone, independently generated old-tag, and temporally colored Gaussian attack signals. Each is tested at attack-to-noise ratios $-10$ and $20$ dB with $300{,}000$ independent fresh templates. Here $\alpha=10^{-3}$ is selected to obtain enough false-acceptance events for a useful Monte Carlo check. The network design still uses $10^{-4}$; the figure is not a claim to empirically certify a cryptographically small tail from a few hundred thousand trials. The invariance across waveform families and powers tests the conditioning argument in Theorem~\ref{thm:fa}.

	\subsection{Frequency Offset and AN Results}
	Figure~\ref{fig:cfo} uses one fixed, saved spherical template of length $32$. Each residual-offset point uses $100{,}000$ noise realizations. The exact prediction is conditional on that template and is evaluated by Poisson mixing of a noncentral-$F$ CDF over the orthogonal noncentrality. It agrees with the independently simulated distorted waveform. The zero-offset design rejects about $1\%$ of authorized blocks; a substantial phase rotation over the block increases rejection sharply. Raising nominal SINR without recalibrating the distorted direction is not a valid remedy.

	\begin{figure}[t]
		\centering
		\subfloat[False acceptance.]{
			\includegraphics[width=.47\columnwidth]{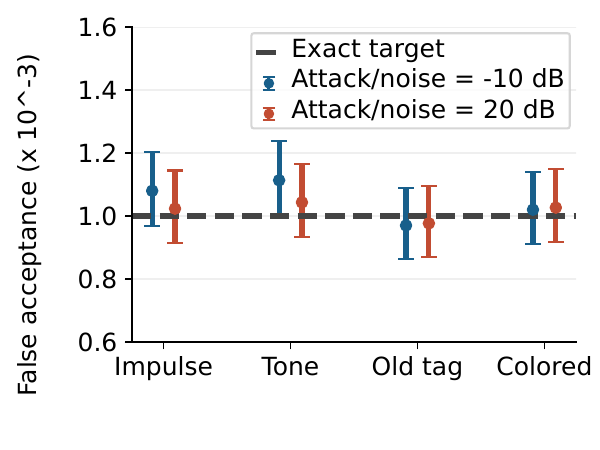}
			\label{fig:attacks}
		}
		\hfill
		\subfloat[Frequency-offset sensitivity.]{
			\includegraphics[width=.47\columnwidth]{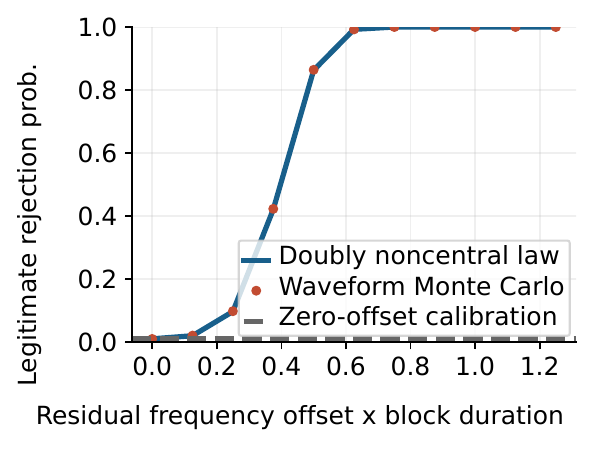}
			\label{fig:cfo}
		}
		\caption{%
			(a) False acceptance for tag-independent received attacks, with
			$n=32$ and target $10^{-3}$. Points and $95\%$ Wilson intervals use
			$300{,}000$ trials per case. The exact law is \eqref{eq:fa}. Blue and
			red circular markers show Monte Carlo estimates for attack-to-noise
			ratios of $-10$ and $20$ dB, respectively. The dashed line shows the
			analytical false-acceptance probability.
			(b) Residual common-frequency-offset sensitivity at the zero-offset
			calibrated tag energy. The horizontal axis is $\Delta f nT_s$. The
			prediction uses the actual template in \eqref{eq:doublebeta}, not a
			mean-coherence substitution. Red circular markers show waveform Monte
			Carlo estimates, while the solid blue curve shows the analytical
			prediction.%
		}
	\end{figure}

	Figure~\ref{fig:rank} is a reduced spectral experiment, not another optimized HAPS deployment. It uses six Eve dimensions. Two cases share AN eigenvalues $(0.3,0.6,1,1.4,0,0)$ but differ in whether the desired vector has energy in the last two coordinates. A third uses a full-rank covariance. Dotted high-power lines evaluate the first two terms of \eqref{eq:rank_law}. The rank-four exposed case grows linearly in power; the equally ranked covered case saturates. With an ideal interference-free legitimate SINR equal to the power scale, their secrecy benchmarks have qualitatively different behavior. The experiment isolates the geometric mechanism and does not establish a universal numerical antenna threshold.
	
	\begin{figure}[t]
		\centering
		\subfloat[Colluding SINR and high-power expansion.]{\includegraphics[width=.47\columnwidth]{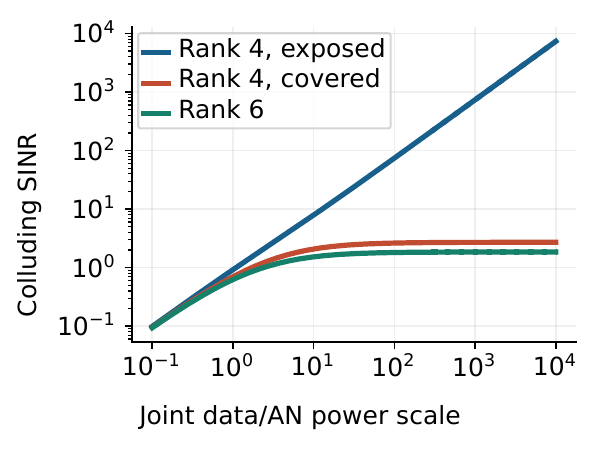}\label{fig:rank:a}}
		\hfill
		\subfloat[Corresponding secrecy-rate benchmark.]{\includegraphics[width=.47\columnwidth]{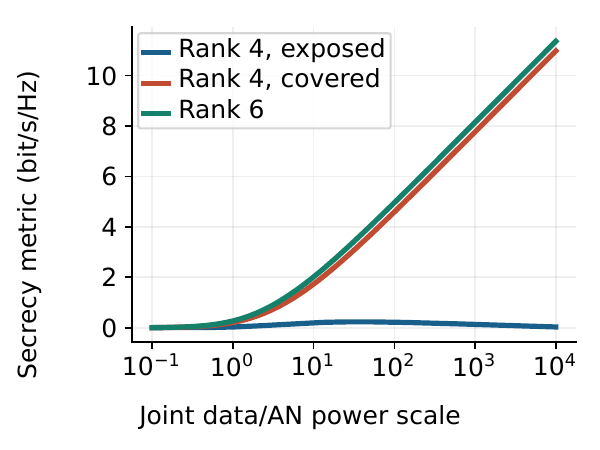}\label{fig:rank:b}}
		\caption{AN bottleneck in \eqref{eq:rank_law}. Panel (b) uses \eqref{eq:secrecy_metric} with $n=32$ and legitimate SINR equal to the power scale. Both rank-four cases have the same AN eigenvalues.}
		\label{fig:rank}
	\end{figure}
	
	\subsection{Network Power and Layout Results}
	Figure~\ref{fig:network} reports the full ten-snapshot comparisons.
	The mean joint-design powers for $n=16,32,64,128$ are
	$0.1931$, $0.1394$, $0.1183$, and $0.1089$ W, respectively.
	At $n=32$, the fixed-tag baseline uses $0.1921$ W on average,
	versus $0.1394$ W for joint design, a $27.44\%$ reduction in the ratio of paired-sample means.
	This is a transmit-power result at matched authentication and secrecy targets, not an increase in an authentication-tag secrecy proxy.
	All ten seeds are included at every plotted length; the right panel makes the seed dependence visible.
	
	For the 60 principal joint designs (four lengths at the default radius and two additional radii),
	the residual-corrected relative gaps range from $-0.00019\%$ to $0.01789\%$.
	The tiny negative endpoint reflects floating-point feasibility and eigenvalue tolerances, not a violation of the exact bound.
	The largest evaluated legitimate rejection probability is $0.0100001$, the smallest secrecy metric is $0.4999947$ bit/s/Hz,
	and the largest Eve SINR is $0.3000010$.
	The code uses an absolute $2\times10^{-5}$ feasibility allowance for rates, Eve SINR, and per-HAPS power, and the same relative allowance for the calibrated tag ratio.
	Conic solutions are obtained with CVXPY and Clarabel, with Splitting Conic Solver (SCS) used when the first solver reports numerical inaccuracy.
	
	In Fig.~\ref{fig:layout}, the best feasible choice from the three radii uses
	$0.1245$ W on average, $10.73\%$ below using the fixed $4$-km radius on these same snapshots.
	There is no universal best radius across seeds: geometry changes both the useful multicast directions and the spatial constraints imposed by colluding receivers.
	The three layouts are evaluated as deployment alternatives, not consecutive instantaneous flight actions.
	
	The Eve-cap sweep uses one explicitly identified snapshot, seed $100$.
	Allowing AN never needs to increase the exact optimum because zero AN remains feasible,
	but the actual benefit depends on the selected cap and spatial channels.
	At $\zeta=3$, the computed joint and no-AN powers are
	$0.7436$ and $0.8346$ W;
	at $\zeta=0.3$, the solutions use negligible AN and nearly identical power.
	These results do not support a universal percentage gain from adding AN.
	For reference, the much more restrictive fixed-beam design at seed $100$, $n=32$, $\zeta=0.3$ uses
	$13.4273$ W, illustrating why the optimized fixed-tag baseline is the more informative comparison.

	\begin{figure}[t]
		\centering
		\subfloat[Mean power versus authentication length.]{\includegraphics[width=.47\columnwidth]{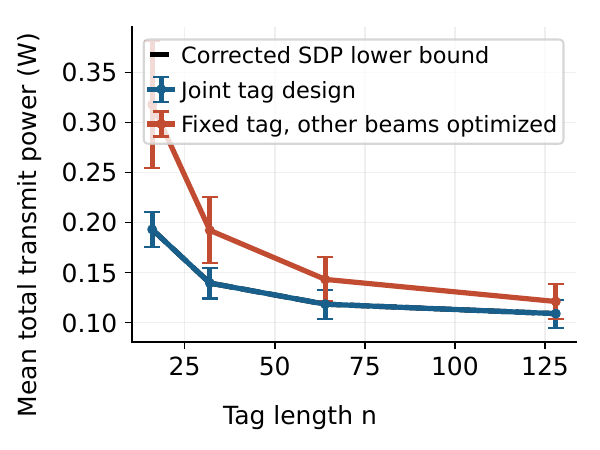}\label{fig:network:a}}
		\hfill
		\subfloat[Per-seed powers at $n=32$.]{\includegraphics[width=.47\columnwidth]{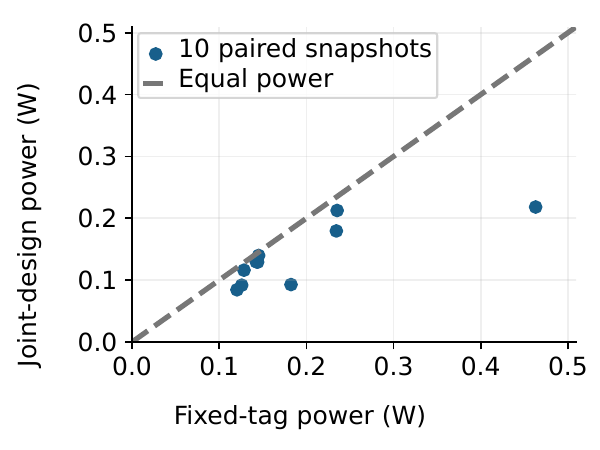}\label{fig:network:b}}
		\caption{Paired resource design across ten Rician snapshots. Error bars in panel (a) show standard errors across the ten snapshots, not confidence in global optimality. Panel (b) compares fixed-tag and joint-design powers for each seed. All plotted designs are checked against the original nonlinear constraints.}
		\label{fig:network}
	\end{figure}
	
	\begin{figure}[t]
		\centering
		\subfloat[Power across candidate HAPS ring radii.]{\includegraphics[width=.47\columnwidth]{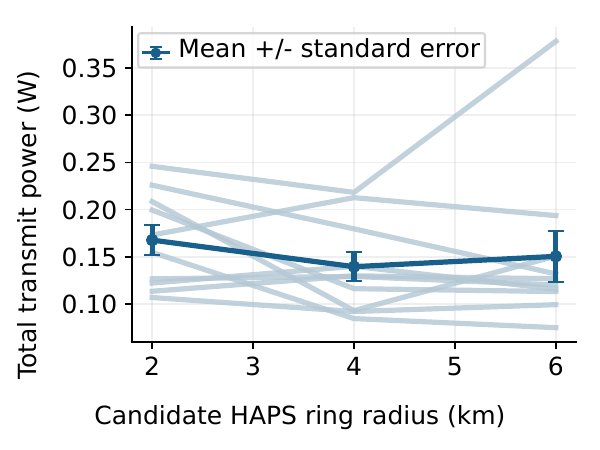}\label{fig:layout:a}}
		\hfill
		\subfloat[Joint and no-AN power versus Eve cap.]{\includegraphics[width=.47\columnwidth]{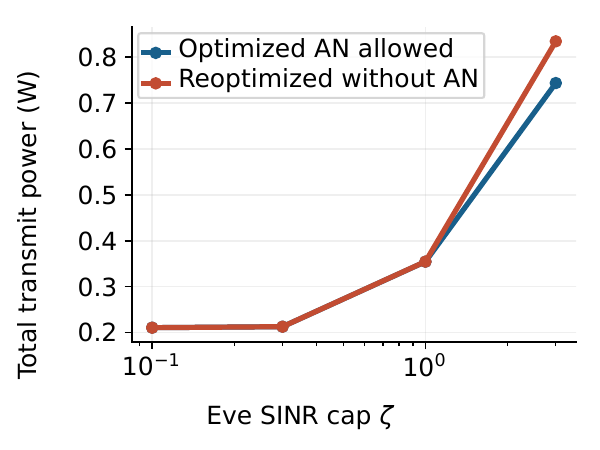}\label{fig:layout:b}}
		\caption{Deterministic design sensitivity. Panel (a) uses the same ten location/scattering draws and selects only within the finite layout catalog. Panel (b) shows the joint and no-AN solutions for seed $100$; changing the Eve cap also changes the sufficient legitimate target in \eqref{eq:Gamma}.}
		\label{fig:layout}
	\end{figure}
	
	\subsection{Discussion and Limitations}
	\label{sec:limitations}
	The results distinguish three resources with different roles. Fresh secret temporal direction controls blind impersonation probability. Received tag energy controls rejection of a legitimate response. Spatial AN controls the leakage directions available to cooperating receivers. More HAPS transmit power cannot replace a fresh response, and a low false-acceptance threshold cannot repair an uncovered Eve direction. These are operationally different constraints even when they share the same antennas and RF budget.
	
	The spherical tag is an analytical and waveform design choice. It is not a constant-envelope symbol sequence. A practical modem should quantify peak-to-average power, quantization of the pseudorandom expansion, and the effect of a finite private constellation. Our Gaussian receiver law is exact under its stated assumptions; labeling a simulation using a finite quadrature amplitude modulation (QAM) constellation as validation of that exact law without checking the approximation would be incorrect. The power-independent false-acceptance statement, in contrast, does not require the attacker or its receiver disturbance to be Gaussian, provided the current template remains independent.
	
	A second boundary is security timing. Previous-transcript replay is covered by independent freshness; forwarding the current response is not. A group insider knows the group key and is not an impersonator under the experiment. Channel estimates, synchronization pilots, fronthaul, and nonce management need their own trusted mechanisms. A broader protocol study would need explicit causal side-information and compromise models instead of extending Theorem~\ref{thm:fa} by assertion.
	
	A third boundary concerns spatial knowledge. Nulling and deterministic leakage caps depend on the CSI used for design. With uncertain or unknown Eve channels, this pointwise problem must be replaced by an uncertainty-set or outage formulation. Likewise, the ring catalog describes discrete deployment planning, not rapid HAPS repositioning over one tag block. The present calculations do not include flight-energy optimization or post-failure reassociation.
	
	Finally, numerical conic feasibility is checked with finite tolerances, and covariance lower bounds can be weak when dual residuals are appreciable. A recovery failure is not an infeasibility proof. Small reported gaps concern only the specified finite catalog, Eve cap, Gaussian channel model, and scheduled users. They do not demonstrate the uniqueness of a physical solution or the optimality of a real hardware implementation.
	
	\section{Conclusion}
	This paper connected finite-block group-tag verification to confidential beamforming in coordinated HAPS networks. Exact authentication laws determine the required tag SINR, while additional analysis characterizes the tag-energy floor, frequency-offset sensitivity, and peak-control penalty. Projection removes the tag with an explicit temporal-dimension cost, and spectral analysis identifies the AN coverage condition for bounded colluding-eavesdropper SINR. The minimum-power design combines SDR, exact private-beam rank recovery, and feasible refinement. The guarantees require fresh secret tags and tag-independent attacks. Joint beam design reduced mean transmit power by 27.4\% relative to the optimized fixed-tag baseline.

	\appendices
	\section{Short Derivations of Rate and Fixed-Shape Constraints}
	\label{app:derivations}
	
	\subsection{Projection loss and sufficient rate target}
	\label{app:projection-rate}
	Because $\mathbf B_s\HH\mathbf s=\mathbf0$, projection removes the tag exactly. The semi-unitarity relation $\mathbf B_s\HH\mathbf B_s=\mathbf I_{n-1}$ preserves the covariance of the temporally white Gaussian disturbance, so the retained dimensions have the SINR in \eqref{eq:legit_sinr}. Only $n-1$ of the $n$ dimensions remain, giving $c_n=(n-1)/n$. For $r_u>0$, the secrecy-rate requirement is
	\begin{align*}
		c_n\log_2(1+\gamma_u)-\log_2(1+\gamma_{e,u})\ge r_u.
	\end{align*}
	Since $\gamma_{e,u}\le\zeta$, imposing $c_n\log_2(1+\gamma_u)\ge r_u+\log_2(1+\zeta)$ is sufficient: subtracting $\log_2(1+\gamma_{e,u})$ leaves a value at least $r_u$. This sufficient condition rearranges to $\gamma_u\ge\Gamma_u=2^{[r_u+\log_2(1+\zeta)]/c_n}-1$ as in \eqref{eq:Gamma}. It is not, in general, necessary for the original secrecy-rate requirement.
	
	\subsection{Fixed-beam powers and the AN leakage floor}
	\label{app:fixed-beam}
	For the fixed-shape subproblem with the sufficient constraints $\gamma_u\ge\Gamma(n,\zeta)$ and $\gamma_{e,u}\le\zeta$, consider one user and null-space AN. The projected SINR is $\gamma_u=p_b g_b/\sigma_u^2$, so $\gamma_u\ge\Gamma$ gives $p_b^\star=\Gamma\sigma_u^2/g_b$. The authentication constraint then gives $p_t g_t/(p_b^\star g_b+\sigma_u^2)\ge\rho_n^\star$, hence the minimum $p_t^\star$ in \eqref{eq:fixed_powers}. Larger data or tag powers cannot improve this fixed-shape minimum-power objective or relax the Eve cap.
	
	For the AN feasibility condition, diagonalize $\mathbf A_0=\mathbf U\operatorname{diag}(a_i)\mathbf U\HH$ and set $z_i=[\mathbf U\HH\mathbf f_0]_i$. Then
	\begin{align*}
		&p_b^\star\mathbf f_0\HH(p_a\mathbf A_0+\sigma_e^2\mathbf I)^{-1}\mathbf f_0\\
		&\quad=p_b^\star\sum_{i:a_i>0}\frac{|z_i|^2}{p_a a_i+\sigma_e^2}
		+\frac{p_b^\star}{\sigma_e^2}\sum_{i:a_i=0}|z_i|^2.
	\end{align*}
	The positive-eigenvalue terms vanish as $p_a\to\infty$, while the null-space term stays fixed. Thus the limit is the leakage floor after \eqref{eq:an_scalar}; monotonicity gives the stated feasibility conditions.
	
	\subsection{Rank-one form of the Eve matrix constraint}
	\label{app:eve-lmi}
	For $\mathbf X_u=\mathbf v_u\mathbf v_u\HH$ and $\mathbf C_e\succ\mathbf0$, congruence of \eqref{eq:sdr_eve} by $\mathbf C_e^{-1/2}$ gives
	\[
	(\mathbf C_e^{-1/2}\mathbf G\mathbf v_u)(\mathbf C_e^{-1/2}\mathbf G\mathbf v_u)\HH\preceq\zeta\mathbf I.
	\]
	The left side is rank one, with sole possible nonzero eigenvalue $\mathbf v_u\HH\mathbf G\HH\mathbf C_e^{-1}\mathbf G\mathbf v_u=\gamma_{e,u}$. The matrix inequality is therefore equivalent to the scalar Eve cap for rank-one private covariance; for unrestricted $\mathbf X_u\succeq\mathbf0$, it remains a convex relaxation.
	
	\IfFileExists{refs.bib}{
		\bibliographystyle{IEEEtran}
		\bibliography{refs}
	}{
	}
	
\end{document}